\documentclass[preprint,11pt,authoryear]{elsarticle}
\usepackage{tikz}
\usepackage{xcolor}
\usepackage{amsfonts}
\usepackage{amsmath, calc}
\usepackage{amsthm}
\usepackage{amssymb}
\usepackage{times}
\usepackage{subfig}
\usepackage{multirow}
\usepackage{setspace}
\usepackage{enumerate}
\usepackage{natbib}
\usepackage{color}
\usepackage{url}
\usepackage{colortbl}
\usepackage{ wasysym }
\usepackage{verbatim}
\usetikzlibrary{shapes,arrows,calc,positioning}
 \usepackage{lscape}
\usepackage{array}
\usepackage{setspace}
\usepackage[normalem]{ulem}
\usepackage{arydshln}
\usepackage{extarrows}

\usepackage{caption}
\usepackage{rotating}


\newcolumntype{x}[1]{%
>{\centering\hspace{0pt}}p{#1}}%

\setcounter{MaxMatrixCols}{10}

\def\iid{\buildrel {\rm i.i.d.} \over \sim}

\def\i.i.d.{\buildrel {\rm i.i.d.} \over \sim}

\def\cw#1 { \overset{\mathbb{P}}{\underset{#1}{\longrightarrow}} }
\def\Real{\mathbb{R}}
\def\P#1{{\mathbb{P}}\left(#1\right)}

\def\Var#1{{\mathrm Var}\left(#1\right)}

\def \rcov#1#2 {{\rm cov}_{#1}\left( #2\right)}

\newcommand{\parallelsum}{\mathbin{\!/\mkern-5mu/\!}}

\newtheorem{example}{Example}

\oddsidemargin0in
\textwidth6.5in
\addtolength{\topmargin}{-.75in}
\textheight 8.5in
\newtheorem{lemma}{Lemma}
\newtheorem{theorem}{Theorem}
\newtheorem{definition}{Definition}
\newtheorem{corollary}{Corollary}
\newtheorem{remark}{Remark}

\newtheorem*{toy*}{Finite Model}

\newtheorem{model}{Model}
\newtheorem{model.c}{Hypothetical Model}

\newtheorem{single.c}{Single Year Hypothetical Model}

\begin{document}
\begin{frontmatter}
\title{On a Multi-Year Microlevel Collective Risk Model}

\author[EH2]{Rosy Oh\corref{cor1}}
\ead{rosy.oh5@gmail.com}
\author[UC]{Himchan Jeong\corref{cor1}}
\ead{himchan.jeong@uconn.edu}
\author[EH]{Jae Youn Ahn\corref{cor2}}
\ead{jaeyahn@ewha.ac.kr}
\author[UC]{Emiliano A. Valdez\corref{cor1}}
\ead{emiliano.valdez@uconn.edu}

\address[EH2]{Institute of Mathematical Sciences, Ewha Womans University, Seoul, Korea.}
\address[EH]{Department of Statistics, Ewha Womans University, Seoul, Korea.}
\address[UC]{Department of Mathematics, University of Connecticut, Connecticut, United States}

\cortext[cor1]{First Authors}
\cortext[cor2]{Corresponding Author}

\begin{abstract}


For a typical insurance portfolio, the claims process for a short period, typically one year, is characterized by observing frequency of claims together with the associated claims severities. The collective risk model describes this portfolio as a random sum of the aggregation of the claim amounts. In the classical framework, for simplicity, the claim frequency and claim severities are assumed to be mutually independent. However, there is a growing interest in relaxing this independence assumption which is more realistic and useful for the practical insurance ratemaking. While the common thread has been capturing the dependence between frequency and aggregate severity within a single period, the work of \cite{Oh2019copula} provides an interesting extension to the addition of capturing dependence among individual severities. In this paper, we extend these works within a framework where we have a portfolio of microlevel frequencies and severities for multiple years. This allows us to develop a factor copula model framework that captures various types of dependence between claim frequencies and claim severities over multiple years. It is therefore a clear extension of earlier works on one-year dependent frequency-severity models and on random effects model for capturing serial dependence of claims. We focus on the results using a family of elliptical copulas to model the dependence. The paper further describes how to calibrate the proposed model using illustrative claims data arising from a Singapore insurance company. The estimated results provide strong evidence of all forms of dependencies captured by our model.

\end{abstract}

\end{frontmatter}

\vfill

\pagebreak

\vfill

\pagebreak

\section{Introduction} \label{sec.1}

According to \cite{Klugman}, the aggregate loss in the classical collective risk model is defined as  $S = \sum_{i=1}^N Y_i,$ where $N$ means the number of claim and $Y_i$ denotes $i^{th}$ individual claim amounts over a fixed period of time with the following assumptions:
\begin{enumerate}
	\item  Conditional on $N = n$, the random variables $Y_{1}, Y_{2}, \ldots, Y_{n}$  are i.i.d. random
variables.
	\item  Conditional on $N = n$, the common distribution of the random variables $Y_{1}, Y_{2}, \ldots, Y_{n}$ does not depend on $n$.
	\item The distribution of $N$ does not depend in any way on the values of $Y_{i}$.
\end{enumerate}

These assumptions might be convenient in terms of computational ease, however, such simplifying assumptions often lead to bias issues especially when used for risk classification. In relaxing such assumptions, various models have been proposed in the insurance literature. An interesting method to model the dependence in the collective risk model is the so-called {\it two-part dependent frequency-severity model} as suggested by \citet{Frees2}. In this model, the dependence is incorporated by using frequency as an explanatory variable in the severity component. A similar approach has been used by \citet{Frees2011health} in the modeling and prediction of frequency and severity of health care expenditure. \citet{Peng} suggested a three-part framework in order to capture the association between frequency and severity components. When generalized linear models (GLMs) are used with the number of claims treated as a covariate in claims severity, \citet{Garrido} showed that the pure premium includes a correction term for inducing dependence. When analyzing bonus-malus data, an interesting observation was made by \citet{park2018} that dependence between claim frequency and severity is driven by the desire to reach a better bonus-malus class.

Applications of copula methods to capture dependence have been recently used in collective risk models. A majority of work in this area focused on modeling the dependence between frequency and average severity with parametric copulas. For example, \cite{czado2012mixed} used Gaussian copulas to extend traditional compound Poisson-Gamma two-part model and incorporated possible dependence. \cite{kramer2013} suggested a similar joint copula-based approach and interestingly observed that ignoring dependence causes a severe underestimation of total loss in a portfolio. \cite{Gee2016} extended the copula-based approach to dependent frequency and average severity using claims data with multiple lines of insurance business. While their findings suggested weak association between frequency and average severity, they concluded that there are strong dependencies among the lines of business.

Unlike choosing a suitable family of marginal distributions, it is usually much harder to choose the correct family of copulas when calibrating these dependent models with data. The work of \cite{kramer2013} investigated test procedures for the selection of a suitable family of copulas in a dependent frequency and average severity model. However, \cite{Oh2019copula} illustrated that indeed it is even more difficult to choose the appropriate dependence structure between frequency and average severity that includes the classical collective risk model as a special case. In particular, even under the most naive assumption of independence between frequency and individual severities, choosing the correct parametric copula presents some challenges.  Inspired by this phenomenon, \citet{Oh2019copula} and \citet{Cossette2019} discussed the construction of single year collective risk models with microlevel data to provide a suitable dependence structure between the frequency and severity components. In part, the extension in this paper that captures dependence of various types of dependence between claim frequency and claim severity over multiple years is motivated by the work of \cite{Oh2019copula}.

In insurance industry, it is important to model the longitudinal property of the insurance losses to predict the fair premium in the future based on each policyholder's historical claims information. However, the existing copula methods in the literature cannot be directly applied in prediction of the premium due to at least one of the following difficulties:
\begin{itemize}
  \item Limited to the analysis of data over a single period or cross-sectional data,
  \item The choice of the copula family to provide a suitable dependence structure between claim frequency and average claim severity can be difficult.
\end{itemize}
Alternatively, the random effect model can be used to model the longitudinal property of the insurance losses.
\citet{hernandez2009net} and \citet{PengAhn} used the shared random effects model to construct the dependence in a collective risk model, where independence between claim frequency and severity conditional on the random effect is assumed and the dependence structure is naturally derived by the shared random effects.
\citet{jeong2019predictive} derived a closed form of credibility premium for compound loss which captures not only the dependence between frequency and severity but also dependence among the multi-year claims of the same policyholder.
However, it is known that the overdispersion and serial dependence can be compounded in the random effect model. Such compounded effect of the random effect can possibly result in pseudo or fake dependence structure in the claims, which in turn leads to the poor prediction of the premium \citep{Denuit2, Murray2013, lee2020poisson}.

In this regard, as a natural extension of shared random effects model and one-year dependent compound risk model, we propose a multi-year framework with microlevel data so that we may incorporate the following dependencies simultaneously:
\begin{itemize}
  \item dependence between a frequency and a severity within a year,
  \item dependence between two distinct severities within a year,
  \item dependence among frequencies across years,
  \item dependence between a frequency and a severity in different years,
  \item dependence between two severities in different years.
\end{itemize}
Specifically, we use a factor copula representation, which can be viewed as a copula model version of the random effect model \citep{krupskii2013factorcopula, krupskii2015structured}, by using 1-year microlevel model as building blocks.

The remainder of this paper is organized as follows. In Section 2, we propose a generalized shared random effects framework for multi-year microlevel collective risk model that incorporates all types of dependencies previously described. We demonstrate that previous methods for dependence modeling can be considered as special cases of our proposed model. In Section 3, we provide a concrete example of our proposed model with elliptical copulas. Because of simplicity, we focus on the family of Gaussian copulas to further explore various correlation structures that satisfy our framework. In Section 4, an empirical analysis with a special case of our proposed model is conducted with a dataset from an automobile insurance company.  Concluding remarks are provided in Section 5 with some future directions of research.

\pagebreak

\section{Construction of the shared random effect parameter model} \label{sec.2}

\subsection{A motivating illustration}

While copula methods are flexible in modeling the dependence, the ``actual" flexibility comes from the proper choice of the parametric copula family.
Although one may consider using the nonparametric copula method for the full flexibility in choosing a copula structure, modeling and interpreting dependence based on the non-parametric copula can be difficult  as long as the discrete random variables are involved mainly due to the lack of uniquness \citep{genest2007primer}.
While recent study in \citet{yang2019nonparametric} provides the safe copula estimation method for discrete outcomes in a regression context, it is known to suffer from the so-called curse of dimensionality.

%


Indeed, as shown in \citet{Oh2019copula}, it is difficult to choose
a proper parametric copula family for the frequency and average severity even under the most naive assumption, the case where frequency and individual severities are independent. This subsection summarizes the example in \citet{Oh2019copula} to explain such difficult and the necessity to use microlevel claims information.

Consider the classical collective risk model where frequency $N$ and the individual severity $Y_j$s are assumed to be independent. Further, assume that
$N$ is a  positive integer valued random variable with
\[ \P{N=n}=\frac{1}{5}, \quad \hbox{for} \quad n=1,2,3,4,5, \]
and
\begin{equation}\label{eq.1}
Y_1, \cdots, Y_N \big\vert N \iid {\rm Gamma}(\xi, \psi).
\end{equation}
Then, \eqref{eq.1} implies
\[
M\big\vert N\sim {\rm Gamma}(\xi, \psi/N ).
\]
Clearly, $N$ and $M$ are not independent even though frequency and individual severities are independent.
Since $N$ is discrete, the visualization and interpretation of the corresponding copula density function for $(N, M)$ can be difficult. Alternatively, \citet{Oh2019copula} provides the density function for the jittered version of $(N, M)$ as shown in Figure \ref{figu.1} where x-axis and y-axis corresponds to frequency $N$ and the average severity $M$, respectively.


  \begin{figure}[h]
     \centering
    \subfloat{%
      \includegraphics[width=0.49\textwidth]{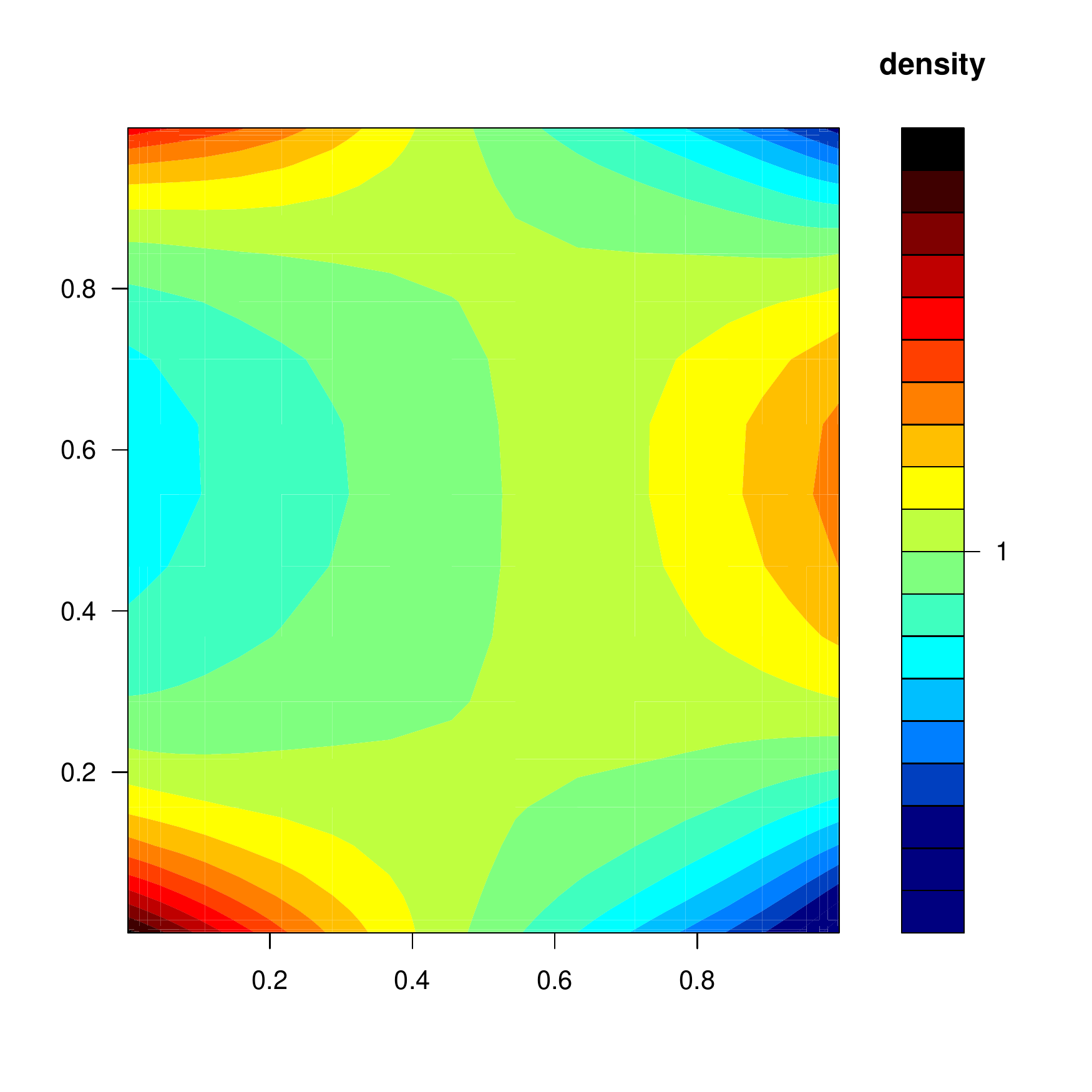}
    }
      \caption{Contour plot, in \citet{Oh2019copula}, of jittered copula density corresponding to $(N,M)$ using a kernel density estimation}
     \label{figu.1}
  \end{figure}

Let $(U_1, U_2)$ be a bivariate random vector sampled from the copula of the jittered version of $(N,M)$. As shown in Figure \ref{figu.1}, the density of the copula tends to be smaller in the middle part of $U_2$ when $U_1$ is smaller, wheras the density tends to be larger in the middle part of $U_2$ when $U_1$ is larger. Therefore, it is straightforward to see that conditional variance of $M$ decreases as $N$ increases in Figure \ref{figu.1}, which is quite intuitive since $\Var{M|N}= \xi^2 \psi /N$ in this case.

This example illustrates that we can see that most existing copulas, including Gaussian and Archimedean copulas, are unable to accommodate the dependence between frequency and average severity properly.
This is a motivation for the modeling the dependence based on the microlevel claims information rather than summarized claims information. We refer the readers to \citet{Oh2019copula} for more details of this example and the detailed construction of the jittered version of $(N, M)$.

\subsection{Data structure and model specification}

For non-life insurance, claims observed are typically a history of frequencies and severities for multiple years. For a policyholder observed for $\tau$ years, we have $n_1, \cdots, n_\tau$ which stand for frequency for each year, and corresponding individual severities $(\boldsymbol{y}_1, \cdots, \boldsymbol{y}_\tau)$ where
  \[
  \boldsymbol{y}_t=\begin{cases} \hbox{not defined}, & n_t=0;\\
  (y_{t,1}, \cdots, y_{t,{n_t}}), & n_t>0;
  \end{cases}
  \]
We find it convenient to define the following symbols for the description of data.

Define a random vector of length $N_t+1$
  \[
  \boldsymbol{Z}_t:=
  \begin{cases}
    \left( N_t, Y_{t,1}, \cdots, Y_{t, N_t}\right), & N_t\ge 1;\\
    0, & N_t=0,
  \end{cases}
  \]
  and the realization of $\boldsymbol{Z}_t$ is denoted as
  \[
  \boldsymbol{z}_t:=
  \begin{cases}
    \left( n_t, \boldsymbol{y}_{t} \right), & n_t\ge 1;\\
    0, & n_t=0.
  \end{cases}
  \]
 Furthermore,
 multi-year extension of $\boldsymbol{Z}_t$ is defined as
  \[
  \boldsymbol{Z}_{({\tau})}:=
    \left(\boldsymbol{Z}_1, \cdots, \boldsymbol{Z}_\tau  \right)
  \]
  and the realization of $\boldsymbol{Z}_{( \tau )}$ is denoted as
  \[
  \boldsymbol{z}_{( \tau )}:=
    \left(\boldsymbol{z}_1, \cdots, \boldsymbol{z}_\tau  \right).
  \]

In the subsequent, we describe a shared random effect parameter model for modeling the type of claims data we observe that primarily consist of frequencies and severities for multiple years.

\begin{model}[The copula linked shared random effect model]\label{model.1}
Consider the following random effect model for $\boldsymbol{Z}_t$ where the joint distribution between the observed losses and the shared random effect is presented with copulas.
\begin{enumerate}
  \item[i.] Shared random effect $R$ follows a probability distribution with density $\pi$.
  \item[ii.] Conditional on $R=r$, we have that
  $\boldsymbol{Z}_t \,$ for $t=1, \cdots$ are independent observations whose distribution function is given by
      \begin{equation}\label{eq.2}
      H_t\left( \boldsymbol{z}_t|r\right):=C_{(\theta_3, \theta_4)}\left( F_t(n_t|r),  G_{t,1}(y_{t,1}|r), \ldots, G_{t,n_t}(y_{t,n_t}|r) \right)
      \end{equation}
where  $F_t$ and $G_{t,j}$ means marginal cumulative distribution functions of $N_t$ and $Y_{t,j}$, respectively and $g_{t,j}$ means joint density function of $Y_{t,j}$.
      As a result, we have the following distribution function of $\boldsymbol{Z}_{(\tau)}$
      \[
      H(\boldsymbol{z}_{(\tau)}):=\int \prod\limits_{t=1}^{\tau} H_t\left( \boldsymbol{z}_t|r\right) \pi(r) {\rm d}r.
      \]
  \item[iii.] The parameters $\theta_3$ and $\theta_4$ of the copula $C_{(\theta_3, \theta_4)}$ controls the independence between the frequency and severities and independence among individual severities, respectively, within a year so that we have 
$$
{h_t{(\boldsymbol{z}_t|r)} } = f_t(n_t|r) g_t^{\rm [joint]}(\boldsymbol{y}_t|r) \ \text{ if and only if } \ \theta_3=0,
$$
where
$$
g_t{( \boldsymbol{y}_t|r)} =\prod\limits_{j=1}^{N_t} g_{t,j}{(y_{t,j}|r)}  \ \text{ if and only if } \ \theta_4=0,
$$
and
 $g_t^{\rm [joint]}$ means joint density function of $\boldsymbol{Y}_t$.

  \item[iv.] 
  $N_t \perp R \,$ for $t=1, \ldots$ if and only if $\theta_1=0$.
  \item[v.] 
  $\boldsymbol{y}_t \perp R\,$ for $t=1, \ldots$ if and only if $\theta_2=0$.
\end{enumerate}
\end{model}

Figure \ref{figu.2} illustrates the dependence structure of our proposed model. In this figure we show that shared random effect $R$ induces the types of dependence that are of interest to us. To illustrate, $R$ is linked to the number of claims across years, $(N_1, \ldots, N_\tau)$, through $C_{\theta_1}$ so that $\theta_1$ is a parameter which captures dependence among claim counts between years. Likewise, $R$ is linked to the individual amounts of claims across years, $(\boldsymbol{Y}_1, \ldots, \boldsymbol{Y}_\tau)$, through $C_{\theta_2}$ so that $\theta_2$ is a parameter which captures dependence among claim amounts within and across the years. Furthermore, $C_{\theta_1}$ combined with $C_{\theta_2}$ introduces the dependence between the claim counts and individual severities within and across the years.

While, via the shared random effect $R$, the parameters $\theta_1$ and $\theta_2$ universally capture dependence among the claims across the years, the other parameters $\theta_3$ and $\theta_4$ specifically capture dependence within the claims of the same year. That is, $\theta_3$ is a parameter which incorporates the dependence between the claim count and claim amounts within a year whereas $\theta_4$ is a parameter which incorporates the dependence among claim amounts within a year. Similarly, $\theta_3$ combined with $\theta_4$ affects the dependence between the claim counts and individual severities within the year.
As a result, while dependence among the claims in different years are modeled by $(\theta_1, \theta_2)$ only, the dependence among the claims in the same year are modeled by both $(\theta_1, \theta_2)$ and $(\theta_3, \theta_4)$. Note that our framework is distinguished from some existing work on dependence modeling with copulas such as   \citet{shi2018pair} and \cite{Leegee}, where average severity in the form of summarized data was used for modeling and implicitly precluded independence among the individual severities within the same year.

$\,$

The idea of our multi-year microlevel collective risk model is that the observed claim for year $t$, $Z_{t}$, are independent for $t=1, \ldots, \tau$ given the shared random effect $R=r$ described as follows:
  \[
h{(\boldsymbol{z}_{(\tau)} |r)}  =\prod\limits_{t=1}^{\tau} h_t({z_t|r}) \xLongrightarrow{\theta_3=0} \prod\limits_{t=1}^{\tau}
\left[  f_t(n_t |r) g_t^{\rm [joint]} \left( \boldsymbol{y}_{t} |r\right)      \right] \xLongrightarrow{\theta_3=\theta_4=0} \prod\limits_{t=1}^{\tau}
\left[  f_t(n_t |r)  \left \{ \prod_{j=1}^{n_t} g_{t,j} \left( {y}_{t,j} |r\right)  \right \}    \right],
  \] and
\begin{equation} \label{eq.3}
\begin{aligned}
h(\boldsymbol{z}_{(\tau)})&=\int  h{(\boldsymbol{z}_{(\tau)} |r)} \pi(r){\rm d}{r}  \xLongrightarrow{\theta_3=0}  \int  \prod\limits_{t=1}^{\tau}
\left[  f_t(n_t |r) g_t^{\rm [joint]} \left( \boldsymbol{y}_{t} |r\right)      \right] \pi(r){\rm d}{r}\\
&\xLongrightarrow{\theta_3=\theta_4=0} \int \prod\limits_{t=1}^{\tau}
\left[  f_t(n_t |r)  \left \{ \prod_{j=1}^{n_t} g_{t,j} \left( {y}_{t,j} |r\right)  \right \}    \right] \pi(r){\rm d}{r} ,
\end{aligned}
\end{equation}
which is straightforward from iii and iv of Model \ref{model.1}.

  \begin{figure}[p]
  \begin{center}
      \includegraphics[width=1\textwidth]{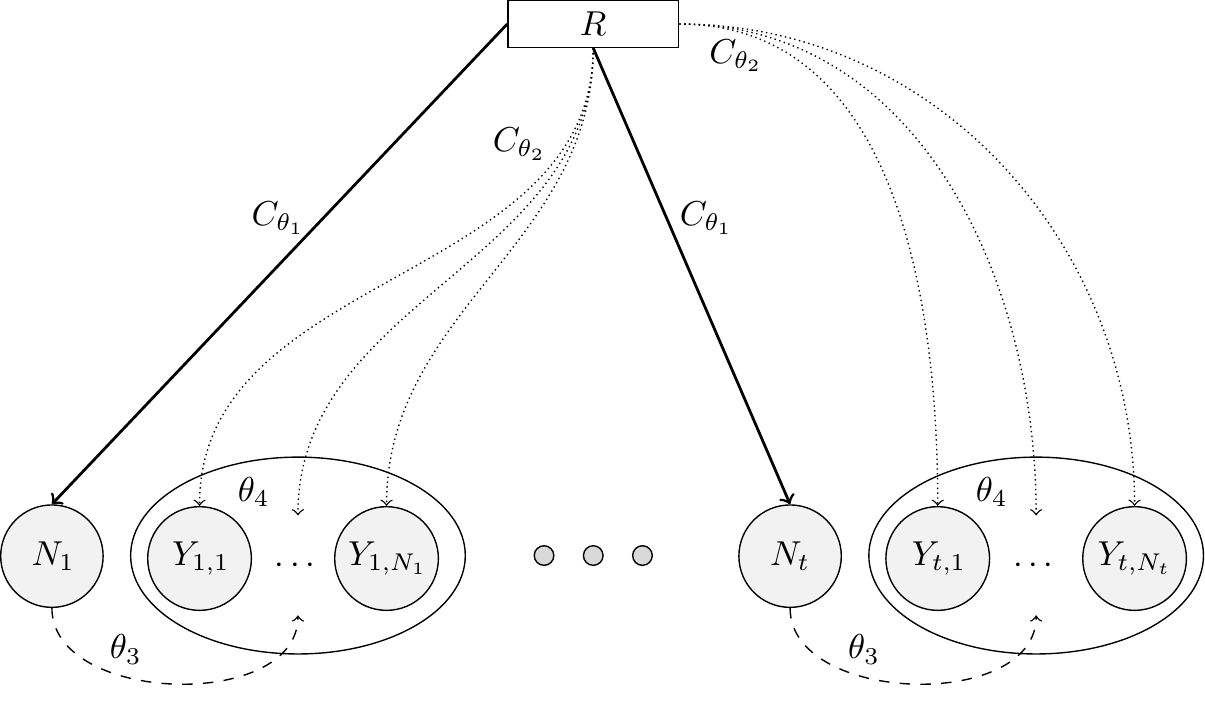}
       \caption{Visual representation of the multi-year microlevel shared random effect model}
     \label{figu.2}
 \end{center}
  \end{figure}

We note that this construction is similar to the model described by \cite{krupskii2013factorcopula}, which develops a factor copula model conditionally on a set of latent variables. In some sense, according to their paper, our approach leads to a one-factor copula model presented in Section \ref{sec.3}. The primary difference in our approach is the clear intuitive interpretation of our model to describe the various types of dependence in a dependent collective risk model. The well-definedness of Model \ref{model.1} will also be discussed in Remark \ref{rem.1} in Section \ref{sec.3}.

\subsection{Special cases}

It is immediate to see that the classical collective risk model of \cite{Klugman} is a special case of our proposed model where $\theta_1=\theta_2=\theta_3=\theta_4=0$. This is the case when all frequencies and severities are mutually independent. \cite{baumgartner2015sharedre} proposed shared random effects model to capture association between frequency and the average severity, which is just another special case of our proposed model. This is the case when $\theta_3=\theta_4=0$.
Finally, it is also easy to check that single-year microlevel collective risk model, proposed by \citet{Oh2019copula}, is another special case of our proposed model. This is when $\theta_1=\theta_2=0$.
In this regard, our proposed framework is quite comprehensive that allows other dependence models that have appeared in the literature as special cases.


\section{Factor copula model based on the elliptical distributions} \label{sec.3}

Copulas generated by elliptical distributions, also called \textit{elliptical copulas}, have the correlation matrix as the primary parameter describing dependence between the components. The Gaussian and $t$ copulas belong to the family of elliptical copulas. We refer to \citet{landsman2003tail} for other choices of elliptical copulas including the copulas generated from multivariate Cauchy or multivariate logistic distribution.
 In this section, for simplicity, apparent ease of computations, and steering clear of distractions from the general case, we focus on the case of Gaussian copulas. In  \ref{appendix.b}, we illustrate how Gaussian copulas in multi-year microlevel collective risk model can be generalized into the elliptical copulas by providing an example of $t$ copula among other choices of elliptical copulas.  Specifically, we consider Gaussian copulas with a specific covariance matrix to accommodate the dependence structure of multi-year microlevel collective risk model, and show that such Gaussian copula models can be represented as factor copula models.
For the use in elliptical copulas including the Gaussian and $t$ copulas in mind, we begin with describing dependence structure via correlation matrices.

\subsection{Dependence structure via correlation matrix}

We start with definition of symbols. Denote $\mathbb{N}$, $\mathbb{N}_0$, $\Real$, and $\Real^+$ by the set of positive integer, the set of non-negative integer, the set of real number, and the set of positive real number, respectively.

For a $n\times m$ matrix $\boldsymbol{M}$, we denote $(i,j)$-th component of $\boldsymbol{M}$ as $\left[ \boldsymbol{M}\right]_{ij}$. For a row vector $\boldsymbol{v}$ of length $n$, we denote the $i$-th component of $\boldsymbol{v}$ as $\left[\boldsymbol{v}\right]_i$.
    For $n\in\mathbb{N}$, define ${\bf 1}_n$ and $\boldsymbol{J}_{n\times n}$ as a column vector of $1$ with length $n$ and a $n \times n$ matrix of ones, respectively. We use $\boldsymbol{I}_n$ for $n\in\mathbb{N}$ to represent the $n \times n$ identity matrix.

Suppose $\boldsymbol{\Sigma}_{1,1}$, $\boldsymbol{\Sigma}_{1,2}$, $\boldsymbol{\Sigma}_{2,1}$, and $\boldsymbol{\Sigma}_{2,2}$ are $\ell\times \ell$, $\ell\times m$, $m\times \ell$, and $m\times m$ matrices, respectively.
Define $(\ell+m)\times (\ell+m)$ matrix $\boldsymbol{\Sigma}$ as
\[
\boldsymbol{\Sigma}=\left(
                      \begin{array}{cc}
                        \boldsymbol{\Sigma}_{1,1} & \boldsymbol{\Sigma}_{1,2} \\
                        \boldsymbol{\Sigma}_{2,1} & \boldsymbol{\Sigma}_{2,2} \\
                      \end{array}
                    \right)
\]
If $\boldsymbol{\Sigma}_{2,2}$ is invertible, the Schur complement of the block $\boldsymbol{\Sigma}_{2,2}$
of the matrix $\boldsymbol{\Sigma}$ is the $\ell\times \ell$ matrix defined by
\[
\boldsymbol{\Sigma}\parallelsum \boldsymbol{\Sigma}_{1,1} :=\boldsymbol{\Sigma}_{2,2}- \boldsymbol{\Sigma}_{2,1}\left( \boldsymbol{\Sigma}_{1,1} \right)^{-1} \boldsymbol{\Sigma}_{1,2}.
\]

\begin{definition}\label{def.1}
For $\boldsymbol{n}= (n_1, \cdots, n_\tau)\in\left(\mathbb{N}_0\right)^\tau$ and $\boldsymbol{\rho}=(\rho_1, \cdots, \rho_5)\in \Real^5$
define the partitioned matrix
\begin{equation}\label{eq.4}
\boldsymbol{\Sigma}_{(\boldsymbol{n})}^{(\boldsymbol{\rho})}:=\left(                     \begin{array}{ccc}
                        \boldsymbol{\Sigma}_{11}^{(\boldsymbol{\rho})} & \cdots & \boldsymbol{\Sigma}_{1\tau}^{(\boldsymbol{\rho})}\\
                               \vdots & \ddots & \vdots\\
                        \boldsymbol{\Sigma}_{\tau 1}^{(\boldsymbol{\rho})} & \cdots & \boldsymbol{\Sigma}_{\tau\tau}^{(\boldsymbol{\rho})}\\
                      \end{array} \right).
\end{equation}
For $i=1, \cdots, \tau$, the matrix $\boldsymbol{\Sigma}_{tt}^{(\boldsymbol{\rho})}$ is a $(n_t+1)\times (n_t+1)$ matrix defined as
\[
\left[\boldsymbol{\Sigma}_{tt}^{(\boldsymbol{\rho})}\right]_{\ell m}=\begin{cases}
  1, &\ell=m;\\
  \rho_2, & \ell\neq m,\quad \min\{\ell, m\}\ge 2;\\
  \rho_1, & \hbox{elsewhere};
\end{cases}
\]
for $\ell,m=1, \cdots, n_t+1$.
Furthermore, for $i,j=1, \cdots, t$ with $i\neq j$,  the matrix $\boldsymbol{\Sigma}_{ij}$ is a $(n_i+1)\times (n_j+1)$ matrix defined as
\[
\left[\boldsymbol{\Sigma}_{tj}^{(\boldsymbol{\rho})}\right]_{\ell m}=\begin{cases}
  \rho_3, &\ell=m=1;\\
  \rho_5, & \min\{\ell, m\}\ge 2;\\
  \rho_4, & \hbox{elsewhere};
\end{cases}
\]
for $\ell=1, \cdots, n_t+1$ and $m=1, \cdots, n_j+1$.
\end{definition}
\begin{example}
Consider the case $\boldsymbol{n}= (2, 3)$. Then we can write out $\boldsymbol{\Sigma}_{(\boldsymbol{n})}^{(\boldsymbol{\rho})}$ by denoting  $\boldsymbol{n}= (2, 3)$ and $\boldsymbol{\rho}=(\rho_1, \cdots, \rho_5)\in \Real^5$. As a result,
  $\boldsymbol{\Sigma}_{(\boldsymbol{n})}^{(\boldsymbol{\rho})}$ is a $7\times 7$ defined as
$$
\boldsymbol{\Sigma}_{(\boldsymbol{n})}^{(\boldsymbol{\rho})}:=\left(                     \begin{array}{cc}
                        \boldsymbol{\Sigma}_{11}^{(\boldsymbol{\rho})} & \boldsymbol{\Sigma}_{12}^{(\boldsymbol{\rho})}\\
                        \boldsymbol{\Sigma}_{21}^{(\boldsymbol{\rho})} &  \boldsymbol{\Sigma}_{22}^{(\boldsymbol{\rho})}\\
                      \end{array} \right)
$$
where
\[
\boldsymbol{\Sigma}_{11}^{(\boldsymbol{\rho})}=\left(
                                                 \begin{array}{ccc}
                                                   1 & \rho_1 & \rho_1 \\
                                                   \rho_1 & 1 & \rho_2 \\
                                                   \rho_1 & \rho_2 & 1 \\
                                                 \end{array}
                                               \right)
                                               \quad
\boldsymbol{\Sigma}_{12}^{(\boldsymbol{\rho})}=\left(
                                                 \begin{array}{cccc}
                                                  \rho_3 & \rho_4 & \rho_4 & \rho_4 \\
                                                   \rho_4 & \rho_5 & \rho_5 & \rho_5\\
                                                   \rho_4 & \rho_5 & \rho_5 & \rho_5\\
                                                 \end{array}
                                               \right)
                                               \quad\hbox{and}\quad
\boldsymbol{\Sigma}_{22}^{(\boldsymbol{\rho})}=\left(
                                                 \begin{array}{cccc}
                                                   1 & \rho_1 & \rho_1 & \rho_1 \\
                                                   \rho_1 & 1 & \rho_2 & \rho_2\\
                                                   \rho_1  & \rho_2 & 1 & \rho_2\\
                                                   \rho_1 & \rho_2 & \rho_2 & 1 \\
                                                 \end{array}.
                                               \right)
\]
Furthermore, from the above and the following
\[
\boldsymbol{\Sigma}_{21}^{(\boldsymbol{\rho})}=\left( \boldsymbol{\Sigma}_{12}^{(\boldsymbol{\rho})} \right)^{\mathrm T},
\]
 we have
 \[
 \boldsymbol{\Sigma}_{(\boldsymbol{n})}^{(\boldsymbol{\rho})}=
 \left(
   \begin{array}{ccc:cccc}
     1 & \rho_1 & \rho_1 & \rho_3 & \rho_4 & \rho_4 & \rho_4 \\
     \rho_1 & 1 & \rho_2 & \rho_4 & \rho_5 & \rho_5 & \rho_5 \\
     \rho_1 & \rho_2 & 1 & \rho_4 & \rho_5 & \rho_5 & \rho_5 \\
         \hdashline
     \rho_3 & \rho_4 & \rho_4 & 1 & \rho_1 & \rho_1 & \rho_1 \\
     \rho_4 & \rho_5 & \rho_5 & \rho_1 & 1 & \rho_2 & \rho_2  \\
     \rho_4 & \rho_5 & \rho_5 & \rho_1 & \rho_2 & 1 & \rho_2  \\
     \rho_4 & \rho_5 & \rho_5 & \rho_1 & \rho_2 & \rho_1 & 1 \\
   \end{array}
 \right)
 \]

\end{example}
In the matrix $\boldsymbol{\Sigma}_{(\boldsymbol{n})}^{(\boldsymbol{\rho})}$, each component will be used for modeling the correlation between frequencies and severities within and across years. For example, the partitioned matrix $\boldsymbol{\Sigma}_{tt}^{(\boldsymbol{\rho})}$ is a $(n_t+1)\times (n_t+1)$ matrix describing the correlation structure of the random vector $(N_t, Y_{t,1}, \cdots, Y_{t, n_t})$. Specifically, $\rho_1$ in $\boldsymbol{\Sigma}_{tt}^{(\boldsymbol{\rho})}$ is used for a correlation between a frequency and a severity in the $t$-th year, and $\rho_2$ in
$\boldsymbol{\Sigma}_{tt}^{(\boldsymbol{\rho})}$ is used for a correlation among the severities in the $t$-th year.
 On the other hand, the partitioned matrix $\boldsymbol{\Sigma}_{tj}^{(\boldsymbol{\rho})}$ is a $(n_t+1)\times (n_j+1)$ matrix describing the correlation structure between the random vectors $(N_t, Y_{t,1}, \cdots, Y_{t, n_t})$ and $(N_j, Y_{j,1}, \cdots, Y_{j, n_j})$. Specifically,
$\rho_3$ in $\left[\boldsymbol{\Sigma}_{tj}^{(\boldsymbol{\rho})}\right]_{1 1}$ is used for a correlation between the frequencies in the different years, and $\rho_4$ in $\boldsymbol{\Sigma}_{tj}^{(\boldsymbol{\rho})}$ is used for a correlation between a frequency in different years. Finally, $\rho_5$ in $\boldsymbol{\Sigma}_{tj}^{(\boldsymbol{\rho})}$ is used for a correlation between a frequency and a severity in different years. The following is  summarization for the meaning of each correlation:
\begin{itemize}
  \item $\rho_1$: correlation between a frequency and a severity within a year;
  \item $\rho_2$: correlation among two distinct severities within a year;
  \item $\rho_3$: correlation among frequencies across years;
  \item $\rho_4$: correlation between a frequency and a severity in different years;
  \item $\rho_5$: correlation between two severities in different years.
\end{itemize}

We finally note that $\boldsymbol{\Sigma}_{tt}^{(\boldsymbol{\rho})}$ only depends on $(\rho_1, \rho_2)$ while $\boldsymbol{\Sigma}_{tj}^{(\boldsymbol{\rho})}$ for $t\neq j$ only depends on $(\rho_3, \rho_4, \rho_5)$. Hence, we find that it is convenient to use $\boldsymbol{\Sigma}_{tt}^{(\boldsymbol{\rho}^*)}$ with
$\boldsymbol{\rho}^*=(\rho_1, \rho_2)$
to stand for $\boldsymbol{\Sigma}_{tt}^{(\boldsymbol{\rho})}$, and similarly
$\boldsymbol{\Sigma}_{tj}^{(\boldsymbol{\rho}^*)}$ for $t\neq j$ with $\boldsymbol{\rho}^*=(\rho_3, \rho_4, \rho_5)$ to stand for $\boldsymbol{\Sigma}_{tj}^{(\boldsymbol{\rho})}$ in a clear context.

\begin{definition}\label{def.2}
  For $\boldsymbol{n}= (n_1, \cdots, n_\tau)\in\mathbb{N}_0^\tau$, $\boldsymbol{\rho}=(\rho_1, \cdots, \rho_5)\in (-1,1)^5$,
  $\boldsymbol{\theta}=(\theta_1, \theta_2)\in (-1,1)^2$,
define the partitioned matrix as $\boldsymbol{\Sigma}_{(\boldsymbol{n})}^{(\boldsymbol{\rho}; \boldsymbol{\theta})}$
as
\begin{equation}\label{eq.5}
\boldsymbol{\Sigma}_{(\boldsymbol{n})}^{(\boldsymbol{\rho}, \boldsymbol{\theta})}:=
\left(
  \begin{array}{cc}
    \boldsymbol{I}_1 & \boldsymbol{\Omega}_{(\boldsymbol{n})}^{(\boldsymbol{\theta})} \\
    \left( \boldsymbol{\Omega}_{(\boldsymbol{n})}^{(\boldsymbol{\theta})}\right)^{\mathrm T} & \boldsymbol{\Sigma}_{(\boldsymbol{n})}^{(\boldsymbol{\rho})}\\
  \end{array}
\right)
\end{equation}
where $\boldsymbol{\Sigma}_{(\boldsymbol{n})}^{(\boldsymbol{\rho})}$ is defined in \eqref{eq.4} and $\boldsymbol{\Omega}_{(\boldsymbol{n})}^{(\boldsymbol{\theta})}$ is a $1\times (\bar{\boldsymbol{n}}+\tau)$  matrix which can be expressed based on the following partitioned matrix
\[
\boldsymbol{\Omega}_{(\boldsymbol{n})}^{(\boldsymbol{\theta})}:= \left(\boldsymbol{\Omega}_{n_1}^{(\boldsymbol{\theta})}, \cdots, \boldsymbol{\Omega}_{n_\tau}^{(\boldsymbol{\theta})} \right)
\]
with $\boldsymbol{\Omega}_{n_t}^{(\boldsymbol{\theta})}$ being a $1\times (n_t+1)$ matrix given by
\[
\left[ \boldsymbol{\Omega}_{n_t}^{(\boldsymbol{\theta})} \right]_{1\ell}:=
\begin{cases}
  \theta_1, & \ell=1;\\
  \theta_2, & \hbox{otherwise}.
\end{cases}
\]
\end{definition}

In Definition \ref{def.2}, we have introduced two parameters $\theta_1$ and $\theta_2$. We impose natural dependence for multiples years of observed claims by using the shared random effect  $R$, which will affect all frequency and severities in any calendar year. In this regard, $\theta_1$ will be served as correlation parameter between the random effect $R$ and a frequency, and $\theta_2$ will be served as correlation parameter between a random effect $R$ and each severity, as described in Figure \ref{figu.1}.

\begin{example}
Consider the case $\boldsymbol{n}= (2, 3)\in\mathbb{N}_0^2$, then one can represent
$\boldsymbol{\Sigma}_{\boldsymbol{n}}^{(\boldsymbol{\rho}, \boldsymbol{\theta})}$ as a partitioned matrix as
  \[
\boldsymbol{\Sigma}_{(\boldsymbol{n})}^{(\boldsymbol{\rho}, \boldsymbol{\theta})}:=
\left(
  \begin{array}{cc}
    \boldsymbol{I}_1 & \boldsymbol{\Omega}_{(\boldsymbol{n})}^{(\boldsymbol{\theta})} \\
    \left( \boldsymbol{\Omega}_{(\boldsymbol{n})}^{(\boldsymbol{\theta})}\right)^{\mathrm T} & \boldsymbol{\Sigma}_{(\boldsymbol{n})}^{(\boldsymbol{\rho})}\\
  \end{array}
\right)
\]
where $\boldsymbol{\Sigma}_{\boldsymbol{n}}^{(\boldsymbol{\rho})}$ is in \eqref{eq.4}, and
\[
\boldsymbol{\Omega}_{\boldsymbol{n}}^{(\boldsymbol{\theta})}=
\left(
  \begin{array}{ccccccc}
    \theta_1 & \theta_2 & \theta_2 & \theta_1 & \theta_2 & \theta_2 & \theta_2 \\
  \end{array}
\right).
\]
Hence, we have
\[
\boldsymbol{\Sigma}_{(\boldsymbol{n})}^{(\boldsymbol{\rho}, \boldsymbol{\theta})}=
\left(
  \begin{array}{c:ccc:cccc}
    1 & \theta_1 & \theta_2 & \theta_2 & \theta_1 & \theta_2 & \theta_2 & \theta_2 \\
             \hdashline
    \theta_1 &  1 & \rho_1 & \rho_1 & \rho_3 & \rho_4 & \rho_4 & \rho_4 \\
    \theta_2 &  \rho_1 & 1 & \rho_2 & \rho_4 & \rho_5 & \rho_5 & \rho_5 \\
    \theta_2 &  \rho_1 & \rho_2 & 1 & \rho_4 & \rho_5 & \rho_5 & \rho_5 \\
         \hdashline
    \theta_1 & \rho_3 & \rho_4 & \rho_4 & 1 & \rho_1 & \rho_1 & \rho_1 \\
    \theta_2 & \rho_4 & \rho_5 & \rho_5 & \rho_1 & 1 & \rho_2 & \rho_2  \\
    \theta_2 & \rho_4 & \rho_5 & \rho_5 & \rho_1 & \rho_2 & 1 & \rho_2  \\
    \theta_2 & \rho_4 & \rho_5 & \rho_5 & \rho_1 & \rho_2 & \rho_1 & 1 \\
  \end{array}
\right).
\]
\end{example}

Now, for $\boldsymbol{n}= (n_1, \cdots, n_\tau)\in\left(\mathbb{N}_0\right)^\tau$ and $\boldsymbol{\rho}=(\rho_1, \cdots, \rho_5)\in \Real^5$, we consider reparameterization of a matrix $\boldsymbol{\Sigma}_{(\boldsymbol{n})}^{(\boldsymbol{\rho})}$ with
\begin{equation}\label{eq.6}
\begin{cases}
  \rho_1=\theta_1\theta_2+\theta_3\theta_4\\
  \rho_2=\theta_2^2+\theta_4^2\\
  \rho_3=\theta_1^2\\
  \rho_4=\theta_1\theta_2\\
  \rho_5=\theta_2^2
\end{cases}
\end{equation}
for
\[
\boldsymbol{\theta}=\left( \theta_1, \cdots, \theta_4\right)\in\Real^4.
\]
The following theorem provides some results related with reparameterization in \eqref{eq.6}.

\begin{theorem} \label{thm.1}
    For $\boldsymbol{n}= (n_1, \cdots, n_\tau)\in\mathbb{N}_0^\tau$, $\boldsymbol{\rho}=(\rho_1, \cdots, \rho_5)\in (-1,1)^5$, consider the Schur Complement of the block $\boldsymbol{I}_1$ of the matrix $\boldsymbol{\Sigma}_{(\boldsymbol{n})}^{(\boldsymbol{\rho}, \boldsymbol{\theta})}$ in \eqref{eq.5}
    denoted as $\boldsymbol{M}:=\boldsymbol{\Sigma}_{(\boldsymbol{n})}^{(\boldsymbol{\rho}, \boldsymbol{\theta})}\parallelsum\boldsymbol{I}_1$. For convenience, consider the following block matrix representation of $\boldsymbol{M}$ as
    \begin{equation}\label{eq.7}
     \boldsymbol{M}=
    \left(                     \begin{array}{ccc}
                        \boldsymbol{M}_{1 1} & \cdots & \boldsymbol{M}_{1\tau}\\
                               \vdots & \ddots & \vdots\\
                        \boldsymbol{M}_{\tau 1} & \cdots & \boldsymbol{M}_{\tau\tau}\\
                      \end{array} \right)
    \end{equation}
    where $\boldsymbol{M}_{ij}$ is a $n_i \times n_j$ matrix. Then, we have the following results.
    \begin{enumerate}
      \item[i.] For any $\boldsymbol{n}\in\left(\mathbb{N}_0\right)^{\tau}$, $\boldsymbol{M}$ is a block diagonal matrix, i.e. $\boldsymbol{M}_{ij}$ is a $n_i \times n_j$ zero matrix whenever $i \neq j$, if and only if $\rho_3$, $\rho_4$, and $\rho_5$ satisfy
          \begin{equation}\label{eq.8}
            \rho_3=\theta_1^2, \quad \rho_4=\theta_1\theta_2, \quad\hbox{and}\quad \rho_5=\theta_2^2.
          \end{equation}
      \item[ii.] A matrix $\boldsymbol{\Sigma}_{(\boldsymbol{n})}^{(\boldsymbol{\rho}, \boldsymbol{\theta})}$ is positive definite and $\boldsymbol{M}$ is a block diagonal matrix
          for any $\boldsymbol{n}\in\left(\mathbb{N}_0\right)^{\tau}$
          if and only if $\boldsymbol{\rho}$ is represented as in \eqref{eq.6} and satisfying
          \begin{equation}\label{eq.9}
          \theta_1^2+\theta_3^2<1 \quad\hbox{and}\quad \theta_2^2+\theta_4^2<1.
          \end{equation}
      \item[iii.] 
      A matrix $\boldsymbol{\Sigma}_{(\boldsymbol{n})}^{(\boldsymbol{\rho})}$ with the parametrization in \eqref{eq.6} is positive definite for any $\boldsymbol{n}\in\left(\mathbb{N}_0\right)^{\tau}$ if
      $\boldsymbol{\theta}$ satisfies \eqref{eq.9}.
    \end{enumerate}
\end{theorem}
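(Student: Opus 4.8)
The plan is to make the Schur complement the single engine driving all three parts, using the elementary fact that for a symmetric block matrix with positive definite leading block $A$, off-diagonal block $B$, and trailing block $D$, the whole matrix is positive definite if and only if its Schur complement $D-B^{\mathrm T}A^{-1}B$ is positive definite. Here the pivot is $\boldsymbol{I}_1=1$, so $\boldsymbol{M}=\boldsymbol{\Sigma}_{(\boldsymbol{n})}^{(\boldsymbol{\rho})}-\left(\boldsymbol{\Omega}_{(\boldsymbol{n})}^{(\boldsymbol{\theta})}\right)^{\mathrm T}\boldsymbol{\Omega}_{(\boldsymbol{n})}^{(\boldsymbol{\theta})}$ is a rank-one downdate. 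First I would write $\boldsymbol{M}$ out block by block from Definition \ref{def.1}: the off-diagonal block $\boldsymbol{M}_{tj}$ (for $t\neq j$) has entries $\rho_3-\theta_1^2$ at position $(1,1)$, $\rho_4-\theta_1\theta_2$ on the rest of its first row and column, and $\rho_5-\theta_2^2$ on the lower-right severity part; the diagonal block $\boldsymbol{M}_{tt}$ is the $(n_t+1)\times(n_t+1)$ matrix with $1-\theta_1^2$ at $(1,1)$, $\rho_1-\theta_1\theta_2$ on the rest of the first row and column, $1-\theta_2^2$ on the remaining diagonal, and $\rho_2-\theta_2^2$ off-diagonal among the severities.

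For part i, block-diagonality of $\boldsymbol{M}$ for all $\boldsymbol{n}$ is exactly the simultaneous vanishing of the three off-diagonal entry types. The reverse implication is immediate substitution. For the forward implication I would choose an $\boldsymbol{n}$ with $n_i,n_j\geq 1$ for some $i\neq j$, so that all three entry types actually appear in $\boldsymbol{M}_{ij}$; their vanishing then forces $\rho_3=\theta_1^2$, $\rho_4=\theta_1\theta_2$, $\rho_5=\theta_2^2$, which is precisely \eqref{eq.8}.

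For part ii the Schur criterion reduces positive definiteness of $\boldsymbol{\Sigma}_{(\boldsymbol{n})}^{(\boldsymbol{\rho},\boldsymbol{\theta})}$ to that of $\boldsymbol{M}$; combining with part i (which supplies the last three equations of \eqref{eq.6} and makes $\boldsymbol{M}$ block diagonal), positive definiteness of $\boldsymbol{M}$ is equivalent to that of every $\boldsymbol{M}_{tt}$. Writing the first two equations of \eqref{eq.6} as $\rho_2-\theta_2^2=\theta_4^2$ and $\rho_1-\theta_1\theta_2=\theta_3\theta_4$, the severity sub-block becomes $\boldsymbol{B}=(1-\theta_2^2-\theta_4^2)\boldsymbol{I}_{n_t}+\theta_4^2\boldsymbol{J}_{n_t\times n_t}$, to which I apply the Schur complement once more. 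For sufficiency I would use the eigenstructure of $\boldsymbol{B}$ (eigenvalue $1-\theta_2^2+(n_t-1)\theta_4^2$ on $\boldsymbol{1}_{n_t}$ and $1-\theta_2^2-\theta_4^2$ on its orthogonal complement) to obtain $\boldsymbol{1}_{n_t}^{\mathrm T}\boldsymbol{B}^{-1}\boldsymbol{1}_{n_t}=n_t/\left(1-\theta_2^2+(n_t-1)\theta_4^2\right)$, so the remaining scalar Schur complement is $s(n_t)=(1-\theta_1^2)-\theta_3^2\theta_4^2\,n_t/\left(1-\theta_2^2+(n_t-1)\theta_4^2\right)$; condition \eqref{eq.9} makes $\boldsymbol{B}$ positive definite and gives $s(n_t)>1-\theta_1^2-\theta_3^2>0$ for every $n_t$. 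For necessity I would run the argument backwards: demanding $\boldsymbol{B}$ positive definite as $n_t\to\infty$ forces $\theta_2^2+\theta_4^2<1$ and $\rho_2\geq\theta_2^2$, and it is the latter that lets me define a real $\theta_4$ and then $\theta_3$, thereby producing the representation \eqref{eq.6}; demanding $s(n_t)>0$ for all $n_t$ then forces the first inequality of \eqref{eq.9}.

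Part iii is a one-line corollary: under \eqref{eq.6} and \eqref{eq.9}, part ii yields that $\boldsymbol{\Sigma}_{(\boldsymbol{n})}^{(\boldsymbol{\rho},\boldsymbol{\theta})}$ is positive definite, and $\boldsymbol{\Sigma}_{(\boldsymbol{n})}^{(\boldsymbol{\rho})}$ is its trailing principal submatrix, hence positive definite as well. I expect the main obstacle to be the limiting analysis inside part ii: the family $\{\boldsymbol{M}_{tt}\}_{n_t}$ must be controlled simultaneously over all frequencies, and the binding constraint emerges only as $n_t\to\infty$, where $s(n_t)$ decreases monotonically to $1-\theta_1^2-\theta_3^2$. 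Extracting the clean closed form \eqref{eq.9} from this infinite family of scalar conditions, and in particular justifying the \emph{strict} inequality (the limit only forces $1-\theta_1^2-\theta_3^2\geq 0$, so the borderline $\theta_1^2+\theta_3^2=1$ must be examined separately to confirm that strictness is the correct characterization for positive definiteness to persist uniformly), is the delicate step of the whole argument.
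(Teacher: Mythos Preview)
Your proposal is correct and shares the paper's overall architecture: both proofs pivot on the Schur complement of the $1\times 1$ block $\boldsymbol{I}_1$, compute $\boldsymbol{M}_{ij}=\boldsymbol{\Sigma}_{ij}^{(\boldsymbol{\rho})}-(\boldsymbol{\Omega}_{n_i}^{(\boldsymbol{\theta})})^{\mathrm T}\boldsymbol{\Omega}_{n_j}^{(\boldsymbol{\theta})}$ entrywise to settle part~i, reduce positive definiteness of the extended matrix to that of the block-diagonal $\boldsymbol{M}$ for part~ii, and read off part~iii as an immediate corollary.

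The genuine difference lies in how the positive definiteness of the diagonal blocks $\boldsymbol{M}_{tt}$ is handled uniformly in $n_t$. The paper normalizes $\boldsymbol{M}_{tt}$ to a correlation matrix of the form $\boldsymbol{\Sigma}_{(tt)}^{(\boldsymbol{\rho}^*)}$ with $\rho_1^*=(\rho_1-\theta_1\theta_2)/\sqrt{(1-\theta_1^2)(1-\theta_2^2)}$ and $\rho_2^*=(\rho_2-\theta_2^2)/(1-\theta_2^2)$, and then invokes an external result (Corollary~1 of \citet{Oh2019copula}) stating that this family is positive definite for every size if and only if $(\rho_1^*)^2<\rho_2^*<1$; a short algebraic manipulation then converts this into \eqref{eq.6} and \eqref{eq.9}. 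You instead carry out the analysis directly: a second Schur complement on $\boldsymbol{M}_{tt}$ with respect to its severity sub-block $\boldsymbol{B}=(1-\theta_2^2-\theta_4^2)\boldsymbol{I}_{n_t}+\theta_4^2\boldsymbol{J}_{n_t}$, the explicit eigenstructure of $\boldsymbol{B}$ to evaluate $\boldsymbol{1}^{\mathrm T}\boldsymbol{B}^{-1}\boldsymbol{1}$, and a monotone-limit argument on the scalar $s(n_t)$ as $n_t\to\infty$. Your route is more elementary and fully self-contained, at the cost of the extra limiting analysis and the boundary examination you flag at the end; the paper's route is shorter but outsources precisely that step to the cited reference. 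Both arrive at the same place, and your awareness of the borderline case $\theta_1^2+\theta_3^2=1$ is a point the paper's citation-based argument leaves implicit.
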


\begin{proof}
For the proof of part i, it suffices to show that if $i \ne j$, then
$$
\begin{aligned}
\boldsymbol{M}_{ij} =  \boldsymbol{\Sigma}_{ij} - [\boldsymbol{\Omega}_{n_i}^{(\boldsymbol{\theta})}]^{\mathrm T} \boldsymbol{\Omega}_{n_j}^{(\boldsymbol{\theta})}
\end{aligned}
$$
by definition of $\boldsymbol{M}$ where $\boldsymbol{\Sigma}_{ij}$ and $\boldsymbol{\Omega}_{n_t}^{(\boldsymbol{\theta})}$ are defined in \eqref{eq.5} and \eqref{eq.7}, respectively and it can be written as follows:
\[
\left[\boldsymbol{M}_{ij} \right]_{\ell m}=\begin{cases}
  \rho_3 - \theta_1^2, &\ell=m=1;\\
  \rho_5 - \theta_2^2, & \min\{\ell, m\}\ge 2;\\
  \rho_4 - \theta_1\theta_2, & \hbox{elsewhere}.
\end{cases}
\]
  For the proof of part ii, by Schur decomposition, we have $\boldsymbol{\Sigma}_{(\boldsymbol{n})}^{(\boldsymbol{\rho}, \boldsymbol{\theta})}$ is positive definite if and only if $\boldsymbol{M}$ is positive definite.
  Since $\boldsymbol{M}$ is a block diagonal matrix provided \eqref{eq.6} is satisfied, checking the positive definiteness of $\boldsymbol{M}$ is equivalent to check whether $\boldsymbol{M}_{jj}$
  is positive definite or not. 
  Hence, a matrix $\boldsymbol{\Sigma}_{(\boldsymbol{n})}^{(\boldsymbol{\rho}, \boldsymbol{\theta})}$ is positive definite and $\boldsymbol{M}$ is a block diagonal matrix
          for any $\boldsymbol{n}\in\left(\mathbb{N}_0\right)^{\tau}$
           if and only if
           $\boldsymbol{\Sigma}_{(tt)}^{(\boldsymbol{\rho}^*)}$ is positive definite for any  $t\in\mathbb{N}_0$ where
  \begin{equation*}
  \rho_1^*=\frac{\rho_1-\theta_1\theta_2}{\sqrt{1-\theta_1^2}\sqrt{1-\theta_2^2}}
  \quad\hbox{and}\quad
  \rho_2^*=\frac{\rho_2-\theta_2^2}{1-\theta_2^2}, \quad \theta_1, \theta_2\in(-1,1).
  \end{equation*}
 Following Corollary 1 in \citet{Oh2019copula}, we have positive definite $\boldsymbol{\Sigma}_{(tt)}^{(\boldsymbol{\rho}^*)}$
  for any $t\in\mathbb{N}_0$ if and only if
  \begin{equation}\label{eq.10}
  \left(\rho_1^*\right)^2<\rho_2^*<1.
  \end{equation}
  Finally, simple argument shows that \eqref{eq.10} with the condition $\theta_1, \theta_2\in(-1,1)$ is equivalent with
  \[
  \rho_1=\theta_1\theta_2+\theta_3\theta_4\quad\hbox{and}\quad
  \rho_2=\theta_2^2+\theta_4^2
  \]
  for
            \[
          \theta_1^2+\theta_3^2<1 \quad\hbox{and}\quad \theta_2^2+\theta_4^2<1.
          \]
The proof of part iii immediately follows from part i and ii.
\end{proof}

\subsection{The special case of Gaussian copulas}\label{sec.3.2}

Let 
$F_t$ be non-negative integer-valued distribution functions with the respective probability mass functions
$f_t$ for $t\in \mathbb{N}$. 
Let $G_t$ and $G_{t,j}$ be non-negative real-valued distribution functions with respective probability densities $g_t$ and $g_{t,j}$ for $t, j\in \mathbb{N}$. While it is not necessary but for simplicity, we assume 
  $G_{tj}=G_t$ for any $t,j\in\mathbb{N}$.

We use $\Phi$ and $\phi$ to denote the standard normal distribution and the corresponding density function, respectively.
For a vector $\boldsymbol{\mu}\in\Real^n$ and a $n\times n$ covariance matrix $\boldsymbol{\Sigma}$, we use $\Phi_{\boldsymbol{\mu}, \boldsymbol{\Sigma}}$ to denote the distribution function of multivariate normal distribution with mean $\boldsymbol{\mu}$ and a covariance matrix $\boldsymbol{\Sigma}$, and $\phi_{\boldsymbol{\mu}, \boldsymbol{\Sigma}}$ to denote the corresponding density function.
Now, we are ready to present the multi-year microlevel collective risk model where the Gaussian copula is used to model the dependence.

\begin{model}[The Gaussian copula model for the multi-year microlevel collective risk model]\label{model.2}
Suppose $\boldsymbol{\rho}$ satisfies \eqref{eq.6}.
Then, consider the random vector $\boldsymbol{Z}_t$ whose joint distribution function $H(\boldsymbol{z}_{\tau})$ is given by the following copula model representation
  \begin{equation}\label{eq.11}
  H(\boldsymbol{z}_{(\tau)})=C_{(\boldsymbol{n})}^{(\boldsymbol{\rho})}\left(
  F_1(n_1), G_{1,1}(y_{1,1}), \cdots, G_{1, n_1}(y_{1, n_1}),\cdots
  F_\tau(n_\tau), G_{\tau, 1}(y_{\tau,1}), \cdots, G_{\tau, n_\tau}(y_{\tau, n_\tau})
\right)
  \end{equation}
  where $C_{(\boldsymbol{n})}^{(\boldsymbol{\rho})}$ is a Gaussian copula with correlation matrix $\boldsymbol{\Sigma}_{(\boldsymbol{n})}^{(\boldsymbol{\rho})}$.

\end{model}


From Lemma \ref{thm.1}, the matrix $\boldsymbol{\Sigma}_{(\boldsymbol{n})}^{(\boldsymbol{\rho})}$ is positive definite for any $\boldsymbol{\rho}$ satisfying \eqref{eq.6}. Hence,
$C_{(\boldsymbol{n})}^{(\boldsymbol{\rho})}$ in Model \ref{model.2} is a valid Gaussian copula.
One can see that the estimation of the parameters in \eqref{eq.11} is involved with the calculation of multivariate Gaussian density function which depends on the length of the observer years.
 Let $\boldsymbol{b}= \left( b_1, \cdots, b_\tau \right)$ be vertices where each $b_j$ is equal to either $n_j$ or $n_{j}-1$. Then the corresponding density function of the random vector of $\boldsymbol{Z}_{(\boldsymbol{\tau})}$ at $\boldsymbol{Z}_{(\boldsymbol{\tau})}=\boldsymbol{z}_{(\boldsymbol{\tau})}$ in \eqref{eq.11} is given by
  \begin{equation}\label{eq.12}
  h(\boldsymbol{z}_{(\boldsymbol{\tau})})=
  \sum {\rm sgn}(\boldsymbol{b})    \frac{\partial^{\bar{\boldsymbol{z}}+\tau}}{\partial y_{1,1}\cdots\partial y_{1,n_1},  \partial y_{2,1}\cdots \partial y_{2, n_2}, \cdots, \partial y_{\tau, 1}, \cdots \partial y_{\tau, n_\tau}} H\left( \boldsymbol{z}_{(\boldsymbol{\tau})} \right)
  \end{equation}
  where the sum is taken over all vertices $\boldsymbol{b}$, and ${\rm sgn}(\boldsymbol{b})$ is given by
  \[
  {\rm sgn}(\boldsymbol{b})=
  \begin{cases}
    +1, &\hbox{if $b_j=n_j-1$ for an even number of $j$'s;}\\
    -1, &\hbox{if $b_j=n_j-1$ for an odd number of $j$'s.}
  \end{cases}
  \]
Here, we note that calculation of the density function in \eqref{eq.12} can be difficult due to the following aspects of our model.
\begin{itemize}
  \item Due to the discrete nature of the frequency observations, one can immediately check that the computational complexity in \eqref{eq.12} grows exponentially with $\tau$.
  \item The calculation of each summation in \eqref{eq.12}, which requires a numerical multivariate integration due to the nature of multivariate Gaussian function, can be even difficult especially in high dimensions \citep{Genz2009}
\end{itemize}
However, here we avoid such difficulty by using the following copula representation which is inspired by factor copula representation in \citet{krupskii2013factorcopula}, \citet{nikoloulopoulos2015factor} and \citet{kadhem2019factor}.  For $\boldsymbol{\rho}$ defined in \eqref{eq.6} satisfying \eqref{eq.9},
 we extend the modeling of $\boldsymbol{Z}_{(\boldsymbol{\tau})}$ by including the random effect $R$ so that the joint distribution of $\left(R, \boldsymbol{Z}_{(\boldsymbol{\tau})}\right)$  is given by
  \begin{equation}\label{eq.13}
\begin{aligned}
  &H^*\left(r, \boldsymbol{z}_{(\boldsymbol{\tau})}\right)\\
  &=
  C_{(\boldsymbol{n})}^{(\boldsymbol{\rho}, \boldsymbol{\theta})}\left(\Phi(r),
  F_1(n_1), G_{1,1}(y_{1,1}), \cdots, G_{1, n_1}(y_{1, n_1}),\cdots
  F_\tau(n_\tau), G_{\tau, 1}(y_{\tau,1}), \cdots, G_{\tau, n_\tau}(y_{1, n_\tau})
\right).
\end{aligned}
  \end{equation}
Naturally, by the property of the copula $C_{(\boldsymbol{n})}^{(\boldsymbol{\rho}, \boldsymbol{\theta})}$,  the joint distribution in \eqref{eq.13} implies  the joint distribution function in \eqref{eq.11} in the following sense
\[
H\left(\boldsymbol{z}_{(\boldsymbol{\tau})}\right)=
\lim\limits_{r\rightarrow \infty}H^*\left(r, \boldsymbol{z}_{(\boldsymbol{\tau})}\right),
\]
which further implies that the random vector $\left(R, \boldsymbol{Z}_{(\boldsymbol{\tau})}\right)$ is a natural extension of the random vector $\boldsymbol{Z}_{(\boldsymbol{\tau})}$.
Furthermore, reparameterization in \eqref{eq.6} gives us a well-defined and natural dependence structure with the shared random effect ${R}$ so that claims across multiple years would be independent conditional on ${R}$.
For example, if \eqref{eq.6} holds and $\theta_1=\theta_2=0$, then one can see that $\boldsymbol{Z}_{(\boldsymbol{\tau})}$ are not only conditionally independent but also marginally independent so that $\boldsymbol{Z}_t \perp \boldsymbol{Z}_{t'}$ for all $t \neq t'$.
In addition to \eqref{eq.6}, if $\theta_3=\theta_4=0$, then $\boldsymbol{M}$ is not only block-diagonal, but diagonal, which implies that frequency and severity are independent once the shared random effect $R$ is controlled. In other words,  dependence between frequency and severity are fully explained by the shared random effect  ${R}$. Finally,  if \eqref{eq.6} holds and $\theta_1=\theta_2=\theta_3=\theta_4=0$, then $\boldsymbol{\Sigma}_{(\boldsymbol{n})}^{(\boldsymbol{\rho})}$ is diagonal, which implies our model specification includes the traditional model, which assumes independence among the claims in different years and independence between the frequency and severity.

The following theorem shows us the key idea of our copula representation where
the observed claim $\boldsymbol{Z}_{t}$ for $t=1, \ldots, \tau$ are independent conditional on the random effect $R$, and can be fitted into the special case of Model \ref{model.1}.
In this regard, the copula in \eqref{eq.13} has similar spirit as a factor copula. The corresponding copula of the distribution of $\boldsymbol{Z}_t$ conditional on $R$ is a Gaussian copula which can be represented as 1-factor copula. As a result, the distribution in \eqref{eq.13} have 2-factor copula representation. However, such representation of the model increases the complexity of the notation while provides limited benefit in computational complexity, and hence we do not pursue such representation for the simplicity of the paper.

\begin{theorem}\label{thm.2}
Suppose that $\boldsymbol{\rho}$ satisfies \eqref{eq.6} and joint distribution function $H^*$ of the random vector $\left(R,\boldsymbol{Z}_t\right)$ is given by the factor copula model in \eqref{eq.13}.
Then, we have the following results.
\begin{enumerate}
  \item[i.] The distribution function of $\boldsymbol{Z}_{(\boldsymbol{\tau})}$ can be obtained as in \eqref{eq.11}.
  \item[ii.] The density function of $\boldsymbol{Z}_{(\boldsymbol{\tau})}=\boldsymbol{z}_{(\boldsymbol{\tau})}$ conditional on $R=r$ is given by
  \[
  h^*\left( \boldsymbol{z}_{(\boldsymbol{\tau})}|r\right)
  =\prod\limits_{t=1}^{\tau} h_t^*\left( \boldsymbol{z}_{t}|r\right)
  \]
  where $h_t^*(\cdot|r)$ is the conditional density function of $\boldsymbol{Z}_{t}$ conditional on $R=r$.
  \item[iii.]
  $N_t \perp R \,$ for $t=1, \ldots$ if and only if $\theta_1=0$.
  \item[iv.]
  $\boldsymbol{y}_t \perp R\,$ for $t=1, \ldots$ if and only if $\theta_2=0$.
\end{enumerate}
\end{theorem}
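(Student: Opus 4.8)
The plan is to realize the Gaussian copula $C_{(\boldsymbol{n})}^{(\boldsymbol{\rho}, \boldsymbol{\theta})}$ through its underlying latent Gaussian vector and then read off all four claims from the block structure of $\boldsymbol{\Sigma}_{(\boldsymbol{n})}^{(\boldsymbol{\rho}, \boldsymbol{\theta})}$ in \eqref{eq.5}. Concretely, I would introduce a mean-zero multivariate normal vector $\boldsymbol{W}=(W_0, \boldsymbol{W}_1, \ldots, \boldsymbol{W}_\tau)$ with correlation matrix $\boldsymbol{\Sigma}_{(\boldsymbol{n})}^{(\boldsymbol{\rho}, \boldsymbol{\theta})}$, where the scalar $W_0$ carries the $\boldsymbol{I}_1$ block and produces the shared effect via $R=W_0$ (its marginal is $\Phi$), and each year-block $\boldsymbol{W}_t=(W_{t,0}, W_{t,1}, \ldots, W_{t,n_t})$ produces $\boldsymbol{Z}_t$ through the probability-integral transforms $N_t=F_t^{-1}(\Phi(W_{t,0}))$ and $Y_{t,j}=G_{t,j}^{-1}(\Phi(W_{t,j}))$. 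By construction the copula of $(R,\boldsymbol{Z}_{(\boldsymbol{\tau})})$ is exactly the copula in \eqref{eq.13}. For part i, I would marginalize out $W_0$: the sub-vector $(\boldsymbol{W}_1, \ldots, \boldsymbol{W}_\tau)$ is multivariate normal with correlation matrix equal to the lower-right block $\boldsymbol{\Sigma}_{(\boldsymbol{n})}^{(\boldsymbol{\rho})}$ of $\boldsymbol{\Sigma}_{(\boldsymbol{n})}^{(\boldsymbol{\rho}, \boldsymbol{\theta})}$, so the copula of $\boldsymbol{Z}_{(\boldsymbol{\tau})}$ alone is $C_{(\boldsymbol{n})}^{(\boldsymbol{\rho})}$, which is precisely \eqref{eq.11}; this is the marginalization already recorded as $H(\boldsymbol{z}_{(\boldsymbol{\tau})})=\lim_{r\to\infty}H^*(r,\boldsymbol{z}_{(\boldsymbol{\tau})})$.

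Part ii is the crux and is where Theorem \ref{thm.1} does the work. I would compute the conditional law of $(\boldsymbol{W}_1, \ldots, \boldsymbol{W}_\tau)$ given $W_0=r$: it is multivariate normal with mean $(\boldsymbol{\Omega}_{(\boldsymbol{n})}^{(\boldsymbol{\theta})})^{\mathrm T} r$ and covariance equal to the Schur complement $\boldsymbol{M}=\boldsymbol{\Sigma}_{(\boldsymbol{n})}^{(\boldsymbol{\rho}, \boldsymbol{\theta})}\parallelsum\boldsymbol{I}_1$. Since $\boldsymbol{\rho}$ satisfies \eqref{eq.6}, Theorem \ref{thm.1}(i) guarantees that $\boldsymbol{M}$ is block diagonal across years. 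For a Gaussian vector, a block-diagonal covariance makes the blocks mutually independent, and the conditional mean $(\boldsymbol{\Omega}_{(\boldsymbol{n})}^{(\boldsymbol{\theta})})^{\mathrm T} r$ also splits year by year. Hence $\boldsymbol{W}_1, \ldots, \boldsymbol{W}_\tau$ are conditionally independent given $W_0=r$, and since each $\boldsymbol{Z}_t$ is a deterministic function of $\boldsymbol{W}_t$ and $r$, the vectors $\boldsymbol{Z}_1, \ldots, \boldsymbol{Z}_\tau$ inherit conditional independence, yielding the product form $h^*(\boldsymbol{z}_{(\boldsymbol{\tau})}\vert r)=\prod_{t=1}^{\tau} h_t^*(\boldsymbol{z}_t\vert r)$.

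For parts iii and iv I would restrict attention to the entries of $\boldsymbol{\Omega}_{(\boldsymbol{n})}^{(\boldsymbol{\theta})}$, which record the correlations of $W_0$ with the latent coordinates: this correlation equals $\theta_1$ for every frequency coordinate $W_{t,0}$ and $\theta_2$ for every severity coordinate $W_{t,j}$. If $\theta_1=0$, then $W_0$ is uncorrelated with, and so (being jointly Gaussian) independent of, all $W_{t,0}$, whence $N_t=F_t^{-1}(\Phi(W_{t,0}))\perp R$. Conversely, if $\theta_1\neq 0$, the pair $(R, N_t)$ has joint distribution given by the bivariate Gaussian copula with correlation $\theta_1$ evaluated at $(\Phi(r), F_t(n))$; because $\Phi(r)$ sweeps all of $(0,1)$ and $F_t$ is non-degenerate, this copula cannot coincide with the independence copula on the relevant arguments unless $\theta_1=0$, so $N_t\not\perp R$. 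The argument for iv is identical, with $\theta_2$ in place of $\theta_1$ and the continuous (hence non-degenerate) severity marginals $G_{t,j}$, using that vanishing cross-correlation between $W_0$ and the entire block of severity latents yields joint independence of $R$ and $\boldsymbol{y}_t$.

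The main obstacle is part ii: everything reduces to the block-diagonality of the Schur complement $\boldsymbol{M}$, so the only place needing genuine care is confirming that the reparameterization \eqref{eq.6} is exactly condition \eqref{eq.8} of Theorem \ref{thm.1}(i) — once that is in hand, conditional independence is a standard Gaussian fact. A secondary subtlety lies in the ``only if'' directions of iii and iv, which depend on non-degeneracy of the marginals so that latent-layer dependence is not washed out by the probability-integral transform, a point that is delicate precisely because $N_t$ is discrete.
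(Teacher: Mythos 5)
Your proposal is correct and follows essentially the same route as the paper's own (much terser) proof: realize the copula in \eqref{eq.13} through a latent Gaussian vector, obtain part i by marginalization, derive part ii from the conditional Gaussian law whose covariance is the Schur complement $\boldsymbol{M}$ made block diagonal by Theorem \ref{thm.1} under \eqref{eq.6}, and settle parts iii and iv via the equivalence of zero correlation and independence for Gaussian copulas. Your treatment is in fact more careful than the paper's, notably in the explicit converse argument for iii and iv and in flagging the non-degeneracy of the marginals that it requires.
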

\begin{proof}
The proof of part i is trivial from the property of copula function. The proof of part ii, by the invariance property of the copula under the monotone transformation,
we have that the corresponding copula $C$ of the conditional distribution of random vector $\boldsymbol{Z}_{(\boldsymbol{\tau})}$ conditional on $R=r$ is again
a Gaussian copula. Furthermore, knowing that $C$ is a Gaussian copula,
Theorem \ref{thm.1} shows that $\boldsymbol{Z}_{1}, \cdots, \boldsymbol{Z}_{\tau}$ are independent conditional on $R=r$. The proofs of part iii and iv are immediate from the property of Gaussian copulas.
%
\end{proof}

Based on this result in Theorem \ref{thm.2}, one can obtain the joint density of $(\boldsymbol{Z}_{1}, \cdots, \boldsymbol{Z}_{\tau})$ just with a single dimensional (numerical) integration as the following manner.
\begin{corollary}\label{cor.1}
Consider the random vector $\boldsymbol{Z}_t$ under the settings in Model \ref{model.2}. 
Then, the joint density of  $\boldsymbol{Z}_{(\boldsymbol{\tau})}$  is given as follows:
  \begin{equation*}
\begin{aligned}
  &h(\boldsymbol{z}_{(\boldsymbol{\tau})})\\
  &=\int  \prod\limits_{t=1}^{\tau}\left[
  g_t^{\rm [joint]*}\left(y_{t,1}, \cdots, y_{t, n_t} |r\right)
      \left(
      \Phi\left( \frac{\Phi^{-1}(F(n_t)) -\mu_t}{\sigma_t} \right) - \Phi\left( \frac{\Phi^{-1}(F(n_t-1)) -\mu_t}{\sigma_t} \right)
      \right) \right] \phi(r){\rm d}{r}\\
\end{aligned}
\end{equation*}
where
\begin{equation}\label{eq.14}
\mu_t= \left(\theta_1, \rho_1, \cdots, \rho_1 \right) \left( \boldsymbol{\Sigma}_{(tt)}^{(\boldsymbol{\rho}^*)}
\right)^{-1} \left( r, \Phi^{-1}\left(G(y_{t,1})\right), \cdots, \Phi^{-1}\left(G(y_{t,n_t})\right) \right)^{\mathrm T}
\end{equation}
and
\begin{equation}\label{eq.15}
\left(\sigma_t\right)^2=1-\left(\theta_1, \rho_1, \cdots, \rho_1 \right) \left( \boldsymbol{\Sigma}_{(tt)}^{(\boldsymbol{\rho}^*)}
\right)^{-1}\left(\theta_1, \rho_1, \cdots, \rho_1 \right)^{\mathrm T}
\end{equation}
with $\rho_1^*= \theta_2$ and $\rho_2^*=\rho_2$.
Here, $g_t^{\rm [joint]*}\left(\cdot |r\right)$ is the density function of $\boldsymbol{Y}_{t}$ conditional on $R=r$, and given by
\[
\begin{aligned}
&g_t^{\rm [joint]*}\left(y_{t,1}, \cdots, y_{t,n_t} |r\right)\\
&=\phi_{\boldsymbol{\mu^*}, \boldsymbol{\Sigma}^*}\left(
 \Phi^{-1}\left( G\left(y_{t,1}\right)\right), \cdots, \Phi^{-1}\left( G\left(y_{t,n_t}\right)\right)
 \right)
\prod\limits_{j=1}^{n_t}\frac{g(y_{t,j})}{\phi\left( \Phi^{-1}\left( G(y_{t,j})\right) \right)}
\end{aligned}
\]
where
\[
\boldsymbol{\mu^*}=r\, \theta_2 \boldsymbol{1}_{n_t}\quad\hbox{and}\quad
\boldsymbol{\Sigma}^*= (1-\rho_2) \boldsymbol{I}_{n_t} +  (\rho_2-\theta_2^2) \boldsymbol{J}_{n_t\times n_t}.
\]

\end{corollary}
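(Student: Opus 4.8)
The plan is to integrate out the shared random effect, use the conditional independence across years to split the conditional density into per-year factors, and then split each per-year factor into a continuous severity density and a discrete frequency mass, both of which are read off from the underlying latent Gaussian vector by the usual conditioning formulas. Concretely, since $R$ is standard normal with density $\phi$, Theorem~\ref{thm.2}(ii) gives
\[
h(\boldsymbol{z}_{(\boldsymbol{\tau})})=\int h^*(\boldsymbol{z}_{(\boldsymbol{\tau})}\mid r)\,\phi(r)\,{\rm d}r=\int\prod_{t=1}^{\tau}h_t^*(\boldsymbol{z}_t\mid r)\,\phi(r)\,{\rm d}r .
\]
Because $\boldsymbol{Z}_t=(N_t,Y_{t,1},\ldots,Y_{t,n_t})$ is a mixed discrete--continuous vector, I would factor each conditional factor by the chain rule as
\[
h_t^*(\boldsymbol{z}_t\mid r)=g_t^{\rm [joint]*}(\boldsymbol{y}_t\mid r)\cdot\P{N_t=n_t\mid R=r,\ \boldsymbol{Y}_t=\boldsymbol{y}_t},
\]
the product of the conditional severity density given $R$ and the conditional frequency mass given $R$ and $\boldsymbol{Y}_t$.

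For the severity factor I would pass to the latent normals $\Phi^{-1}(G(y_{t,j}))$ underlying the Gaussian copula. The vector $(Z_R,Z_{Y_{t,1}},\ldots,Z_{Y_{t,n_t}})$ is jointly normal with unit variances, correlation $\theta_2$ between $Z_R$ and each $Z_{Y_{t,j}}$, and correlation $\rho_2$ among the $Z_{Y_{t,j}}$'s, so conditioning on $Z_R=r$ produces a normal vector with mean $r\,\theta_2\boldsymbol{1}_{n_t}$ and covariance $\boldsymbol{\Sigma}^*=(1-\rho_2)\boldsymbol{I}_{n_t}+(\rho_2-\theta_2^2)\boldsymbol{J}_{n_t\times n_t}$. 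Multiplying the resulting multivariate normal density $\phi_{\boldsymbol{\mu}^*,\boldsymbol{\Sigma}^*}$ by the change-of-variables Jacobian $\prod_{j}g(y_{t,j})/\phi(\Phi^{-1}(G(y_{t,j})))$ then yields precisely the stated expression for $g_t^{\rm [joint]*}$.

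For the frequency factor I would use the threshold representation $N_t=n_t\iff\Phi^{-1}(F(n_t-1))<Z_{N_t}\le\Phi^{-1}(F(n_t))$. Conditioning the latent normal $Z_{N_t}$ on $(Z_R=r,\,Z_{Y_{t,j}}=\Phi^{-1}(G(y_{t,j})))_{j}$ gives a univariate normal; the key observation is that the covariance matrix of this conditioning vector is exactly $\boldsymbol{\Sigma}_{(tt)}^{(\boldsymbol{\rho}^*)}$ with $\rho_1^*=\theta_2$, $\rho_2^*=\rho_2$, while the cross-covariance of $Z_{N_t}$ with the conditioning vector is $(\theta_1,\rho_1,\ldots,\rho_1)$. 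The standard Gaussian conditional-mean and conditional-variance formulas then deliver $\mu_t$ and $\sigma_t^2$ exactly as in \eqref{eq.14} and \eqref{eq.15}, and integrating the conditional normal density over the threshold interval produces the difference $\Phi((\Phi^{-1}(F(n_t))-\mu_t)/\sigma_t)-\Phi((\Phi^{-1}(F(n_t-1))-\mu_t)/\sigma_t)$. Substituting both factors back into the integral over $r$ completes the derivation.

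I expect the main obstacle to be the bookkeeping in the frequency step: correctly extracting, from the full correlation matrix $\boldsymbol{\Sigma}_{(\boldsymbol{n})}^{(\boldsymbol{\rho}, \boldsymbol{\theta})}$, both the conditioning covariance $\boldsymbol{\Sigma}_{(tt)}^{(\boldsymbol{\rho}^*)}$ with $\rho_1^*=\theta_2,\rho_2^*=\rho_2$ and the cross-covariance $(\theta_1,\rho_1,\ldots,\rho_1)$, and checking that these reproduce \eqref{eq.14}--\eqref{eq.15}. A secondary care point is keeping the mixed discrete--continuous factorization consistent, so that the severity density is conditioned on $R$ alone while the frequency mass is conditioned on both $R$ and $\boldsymbol{Y}_t$; this ordering is what makes the two factors combine into $h_t^*(\boldsymbol{z}_t\mid r)$ without double counting.
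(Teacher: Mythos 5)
Your proposal is correct and follows essentially the same route as the paper: integrating out $R$, invoking Theorem \ref{thm.2} for conditional independence across years, factoring each $h_t^*(\boldsymbol{z}_t\mid r)$ into the severity density given $R$ times the frequency mass given $(R,\boldsymbol{Y}_t)$, and then deriving the two factors via latent-Gaussian conditioning — which is precisely what the paper delegates to Lemmas \ref{lem.1} and \ref{lem.2} in \ref{appendix.a}. Your identification of the conditioning covariance $\boldsymbol{\Sigma}_{(tt)}^{(\boldsymbol{\rho}^*)}$ (with $\rho_1^*=\theta_2$, $\rho_2^*=\rho_2$) and cross-covariance $(\theta_1,\rho_1,\ldots,\rho_1)$, as well as the conditional mean $r\,\theta_2\boldsymbol{1}_{n_t}$ and covariance $(1-\rho_2)\boldsymbol{I}_{n_t}+(\rho_2-\theta_2^2)\boldsymbol{J}_{n_t\times n_t}$ for the severities, matches the paper's computations exactly.
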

\begin{proof}[Proof of Corollary \ref{cor.1}]
According to Theorem \ref{thm.2}, we extend $\boldsymbol{Z}_{(\boldsymbol{\tau})}$ to the factor copula model $\left(R, \boldsymbol{Z}_{(\boldsymbol{\tau})}\right)$ having the distribution function in \eqref{eq.13}.
Then, we have
\[
\begin{aligned}
h(\boldsymbol{z}_{(\boldsymbol{\tau})}) &=\int  h^*(\boldsymbol{z}_{\boldsymbol{(\tau)}}|r) \phi(r){\rm d}r \\
&= \int \prod\limits_{t=1}^{\tau}  h_t^*(\boldsymbol{z}_{t}|r) \phi(r){\rm d}r \\
		&= \int \prod\limits_{t=1}^{\tau}  \left[ g_t^{\rm [joint]*}\left(y_{t,1}, \cdots, y_{t, n_t} |r\right) 
\P{N_t=n_t | y_{t,1}, \cdots, y_{t, n_t}, r} \right] \phi(r){\rm d}r
\end{aligned}
\]
where $h^*(\cdot|r)$, and $h_t^*(\cdot|r)$
are the density functions of
$\boldsymbol{Z}_{(\boldsymbol{\tau})}$ and $\boldsymbol{Z}_{t}$, respectively, and
$g_t^{\rm [joint]*}\left(y_{t,1}, \cdots, y_{t, n_t} |r\right)$ is the density function of $\left( Y_{t,1}, \cdots, Y_{t, n_t}\right)$ at $\left( Y_{t,1}, \cdots, Y_{t, n_t}\right)=\left( y_{t,1}, \cdots, Y_{y, n_t}\right)$ conditional on $R=r$.
Here, the second equality is from Theorem \ref{thm.2}, and the final equality is just conditional distribution expression of the joint density function.


Finally, it suffices to show that
$$
\begin{aligned}
\P{n_t | y_{t,1}, \cdots, y_{t, n_t}, r} &= \mathbb{P}(N_t\leq n_t|r,y_{t,1}, \cdots, y_{t,n_t}) - \mathbb{P}(N_t\leq n_t-1|r,y_{t,1}, \cdots, y_{t,n_t})\\
	& =  \Phi\left( \frac{\Phi^{-1}(F(n_t)) -\mu_t}{\sigma_t} \right) - \Phi\left( \frac{\Phi^{-1}(F(n_t-1)) -\mu_t}{\sigma_t}
      \right)
\end{aligned}
$$
and
$$
g_t^{\rm [joint]*}\left(y_{t,1}, \cdots, y_{t,n_t} |r\right)=
\phi_{\boldsymbol{\mu^*}, \boldsymbol{\Sigma}^*}\left(
 \Phi^{-1}\left( G\left(y_{t,1}\right)\right), \cdots, \Phi^{-1}\left( G\left(y_{t,n_t}\right)\right)
 \right)
\prod\limits_{j=1}^{n_t}\frac{g(y_{t,j})}{\phi\left( \Phi^{-1}\left( G(y_{t,j})\right) \right)},
$$
which are proved by Lemmas \ref{lem.1} and \ref{lem.2}, respectively in \ref{appendix.a}.
\end{proof}

\begin{remark}\label{rem.1}
  The model specification in Model \ref{model.2}, and equivalently Model \ref{model.1}, is not casual in the sense that the length or dimension of the observation varies depending on the value of the observation. For example, we have $\boldsymbol{z}_t=(n_t, y_{t,1})$ for $n_t=1$ while
   $\boldsymbol{z}_t=(n_t, y_{t,1}, y_{t,2})$ for $n_t=2$. Hence, the model itself does not seem to be well-defined as it is not even clear how to mathematically define cumulative distribution function or the joint density function. In \ref{appendix.c}, we show how to interpret the density function and corresponding distribution function in Model \ref{model.2} so that they are well-defined.
  Specifically, one can easily check that the copula function $C_{(\boldsymbol{n})}^{(\boldsymbol{\rho})}$ in Model \ref{model.2} satisfies the inheritance property in the similar manner as in \eqref{eq.c2}, which further implies that Model \ref{model.2} can be reformulated as Model \ref{model.4} where the distribution and density functions are well-defined.
  Finally, one can easily show that the distribution and density functions in Model \ref{model.2} are the same as those in Model \ref{model.4} so that they are well-defined. We leave the detailed discussion in \ref{appendix.c}. 
  The discussion on the well-definedness of Model \ref{model.2} but limited to one-year model can be also found in \citet{Oh2019copula}.

\end{remark}

\section{Simulation study} \label{sec.4}

In this section, we conduct a simulation study to investigate the finite sample properties of the parameter estimates and effects of the dependences on them for the proposed method based on Model \ref{model.2}.
We assume one risk class only, where the distribution function $F$ follows Poisson distribution with mean parameter $\lambda_0$ and the distribution function $G$ follows Weibull distribution with mean paramter $\xi_0$ and shape paramter $\nu$.
Here, the parameters for the marginal part of severity are specified as $\xi_0=\exp(8)$, and $\nu=0.7$, which are the same for all scenarios.
The portfolio of policyholders of size $I$ observed for three years ($\tau=3$) are generated from the proposed model under 8 scenarios motivated by the real data analysis in Section \ref{sec.5}.
Table \ref{sim_param} provides the rest of parameter settings and the corresponding correlation coefficients.

\begin{table}[h!]
\caption{Parameter settings of the copula part for each scenario}
\centering
\begin{tabular}{ l *{15}{c}}
 \hline
 &&\multicolumn{5}{l}{Parameter} \\
 \cline{3-7}
 Scenario & I & $\lambda_0 $& $\theta_1$ & $\theta_2$& $\theta_3$& $\theta_4$
 && $\rho_1$ & $\rho_2$& $\rho_3$& $\rho_4$& $\rho_5$\\
 \hline
1 	& 500 & 2 & 0.3& 0.3&	0.5& 0.5&& 0.34& 0.34& 0.09& 0.09& 0.09 \\
2 	& & 		 & 0.3& 0.3&	0  & 0  && 0.09& 0.09& 0.09& 0.09& 0.09  \\
3 	& & 		 & 0.3& 0.7&	0.5& 0.5&& 0.46& 0.74& 0.09& 0.21& 0.49  \\
4 	& & 		 & 0.3& 0.7&	0  & 0  && 0.21& 0.49& 0.09& 0.21& 0.49  \\
5 	& & 		 & 0.7& 0.3&	0.5& 0.5&& 0.46& 0.34& 0.49& 0.21& 0.09  \\
6 	& & 		 & 0.7& 0.3&	0  & 0  && 0.21& 0.09& 0.49& 0.21& 0.09  \\
7 	& & 		 & 0.7& 0.7&	0.5& 0.5&& 0.74& 0.74& 0.49& 0.49& 0.49  \\
8 	& & 		 & 0.7& 0.7&	0  & 0  && 0.49& 0.49& 0.49& 0.49& 0.49  \\
\hline
\end{tabular}
\label{sim_param}
\end{table}

For each scenario, Tables \ref{sim_rb} and \ref{sim_mae} summarize the simulation results from 500 independent Monte Carlo samples, including the relative bias and mean squared error (MSE) of the parameter estimates.
Table \ref{sim_rb} indicates that in all the scenarios, the estimates are close to the true parameters of the proposed model and shows that the relative bias and MSE are small.

\begin{table}[h!]
\caption{Relative bias in $\%$ for all the parameters from the each scenarios}
\centering
\begin{tabular}{ l r r r r r r r r r r r  }
 \hline
 &\multicolumn{5}{l}{RB ($\%$)} \\
 \cline{2-12}
 Scenario & $\lambda_0$ & $\xi_0$ & $\nu$ & $\theta_1$ & $\theta_2$& $\theta_3$& $\theta_4$\\
 \hline
1&-0.21&	-0.06 &	0.17 &	-1.48 &	1.08 &	-0.06 &	-0.27\\
2&-0.26  &		-0.26  &		1.19 &		0.84 &		-0.98 &	-  &	-\\
3&-0.52 &		-0.03 &		-10.77 &		-9.81 &		1.16 &		-1.19 &		0.10\\
4&0.05 &		-0.53 &		1.74 &		6.09 &		0.64 &	- &	-\\
5&0.00 &		-0.08 &		-0.29  &		0.33 &		1.43 &		-1.36 &		0.14\\
6&0.38 &		-0.31 &		18.92  &		-1.70 &		-0.63 &		- &		-\\
7&-0.16 &		-0.03 &		-20.82 &		-19.90 &		0.54 &		-0.63 &		-0.01\\
8&0.14 &		-0.50 &		13.95  &		9.68 &		1.00 &		- &		-\\
\hline \hline
\end{tabular}
\label{sim_rb}
\end{table}

\begin{table}[h!]
\caption{Mean absolute error for all the parameters from the 12 scenarios}
\centering
\begin{tabular}{ l r r r r r r r r r r r  }
 \hline
 &\multicolumn{5}{l}{MSE} \\
 \cline{2-10}
Scenario & $\lambda_0$ & $\xi_0$ & $\nu$  & $\theta_1$ & $\theta_2$& $\theta_3$& $\theta_4$\\
 \hline
1&0.0015 & 	0.0008 & 	0.0023 & 	0.0012 & 	0.0021 & 	0.0007 & 	0.0001\\
2&0.0010 & 	0.0009 &    0.0011 & 	0.0004 & 	0.0001 & 	- & 	-\\
3&0.0014 & 	0.0009 & 	0.0131 & 	0.0039 & 	0.0023 & 	0.0008 & 	0.0001\\
4&0.0015 & 	0.0020 & 	0.0023 & 	0.0012 & 	0.0001 & 	- & 	-\\
5&0.0014 & 	0.0011 & 	0.0038 & 	0.0015 & 	0.0033 & 	0.0008 & 	0.0001\\
6&0.0015 & 	0.0009 & 	0.0023 & 	0.0006 & 	0.0001 & 	- & 	-\\
7&0.0012 & 	0.0011 & 	0.0124 & 	0.0064 & 	0.0025 & 	0.0007 & 	0.0002\\
8&0.0017 & 	0.0019 & 	0.0069 & 	0.0024 & 	0.0001 & 	- & 	-\\
\hline\hline
\end{tabular}
\label{sim_mae}
\end{table}

\section{Empirical application} \label{sec.5}

In this section, we now calibrate the proposed model to a real auto insurance dataset,
to examine dependence structure
(i) between frequency and severity within a year,
(ii) among two distinct severities within a year,
(iii) among frequencies across years,
(iv) between frequency and severity in different years,
and (v) between two severities in different years.

\subsection{Data}

For this empirical investigation, we employ a dataset from a general insurer in Singapore, which consists of a portfolio of personal automobile insurance policies with comprehensive coverages. The dataset has been obtained from General Insurance Association of Singapore, a trade association with representations from all the general insurance companies in Singapore. The claims experience observed from this dataset is longitudinal over a period of six years, from 1995 to 2000, and has 17,452 unique policyholders.  Among the observations, we randomly sample 5000 policyholders.
To calibrate the models, the observations for the first five years, 1995-1999 is used as in-sample, or training data, and in order to examine the performance of the model, we use the last year 2000 as the hold-out sample, or test data.

The dataset also contains a set of predictors that could further explain additional variation in the number of claims and the claim amounts.  To summarize the variables observed, we have three categorical variables and one continuous variable: the gender with two levels (male and female), insured's age (Age) with four levelss including age 1 $\in (0,25]$, age 2 $\in (25,35]$, age 3 $\in (35,65]$, and age 4 $\in (65, \infty]$, vehicle age (VehAge) with four levels including vehicle age 1 $\in [0,1]$, vehicle age 2 $\in (1,5]$, vehicle age 3 $\in (5,10]$, and vehicle age 4 $\in (10,\infty]$, and vehicle's capacity expressed in log scale (logVehCapa).

Table \ref{tab.x} summarizes the description and simple statistics of these predictor variables which represent the risk characteristics of policyholders: Gender, Age, VehAge,  and logVehCapa. In Singapore, as observed in this table, there is a disproportionate distribution by gender with more male than female drivers. When we the distribution of drivers by age, it is also not surprising to find fewer percentage of younger drivers, unlike that in other developed countries. The primary reason for this is the extremely expensive cost of owning and maintaining a car, in addition to the efficiency of the use of public transportation. During the period of observation, it is highly discourage to own a car for more than 10 years, and this reflected in this distribution.

Furthermore, a summary of the claim frequency over the years 1995 to 1999 is given in Table \ref{tab.n} and the average claim amount categorized by frequency and year is given in Table \ref{tab.y}. This table suggests that the claims size appears to be unstable over time. We adjust the values of the individual severities, in order to satisfy that the average of individual severity over each year is the same as the average over the 2000 hold-out sample data with 4,659 observations.

\begin{table}[h!t!]
\centering
\caption{Observable policy characteristics used as covariates} \label{tab.x}
\begin{tabular}{l|l r r r r r r r }
\hline \hline
Categorical & \multirow{2}{*}{Description} &&  \multicolumn{3}{r}{\multirow{2}{*}{Proportions}} \\
variables &  &  &  &   \\
\hline
Gender   & Insured's sex:   \\
		& \qquad\qquad  Male = 1  	&& \multicolumn{3}{r}{80.03$\%$} \\
 		& \qquad\qquad  Female = 0 	&& \multicolumn{3}{r}{19.97$\%$} \\
\hline
Age &  	The policyholder's issue age : \\
	& \qquad\qquad Age $\in (0,\,\,\, 25] = 1 $ 	&&   \multicolumn{3}{r}{0.49$\%$} \\ 	
	& \qquad\qquad Age $\in (25, 35] = 2 $ 		&&   \multicolumn{3}{r}{21.68$\%$} \\ 	
	& \qquad\qquad Age $\in (35, 65] = 3 $ 		&&  \multicolumn{3}{r}{76.81$\%$}  \\ 	
	& \qquad\qquad Age $\in (65, \infty] = 4 $ 	&&  \multicolumn{3}{r}{1.03$\%$}  \\ 	
\hline
VehAge &  	Age of vehicle in years : \\
	& \qquad\qquad VehAge $\in [0,\,\quad 1] = 1 $ 	&&  \multicolumn{3}{r}{12.45$\%$} \\ 	
	& \qquad\qquad VehAge $\in (1,\quad 5] = 2 $ 		&&  \multicolumn{3}{r}{57.30$\%$} \\ 	
	& \qquad\qquad VehAge $\in (5, \,\,\,10] = 3 $ 	&&  \multicolumn{3}{r}{29.99$\%$}  \\ 	
	& \qquad\qquad VehAge $\in (10, \infty] = 4 $ 	&&  \multicolumn{3}{r}{0.25$\%$}  \\ 	
\hline	
 Continuous & && \multirow{2}{*}{Min} & \multirow{2}{*}{Mean} & \multirow{2}{*}{Max} \\
 variables \\
\hline
logVehCapa	& Insured vehicle's capacity in cc  &&  6.49 	& 7.19 	& 8.82  \\
\hline \hline
\end{tabular}
\end{table}

\begin{table}[h!t!]
\centering
\caption{Number of observations by frequency and year} \label{tab.n}
\begin{tabular}{l r r r r r r r r r r r }
\hline \hline
&& \multicolumn{7}{l}{Train} && \multicolumn{2}{l}{Test}\\ \cline{3-9} \cline{11-12}
Frequency && 1995 	& 1996 	& 1997 	& 1998 	& 1999	& Count	& $\%$ of Total && 2000 & $\%$ of Total\\
\hline
0	 &&  3103		& 3291 	& 2501 	& 2036 	& 1751 	& 12682	& 91.05 && 	1360 & 92.39 \\
1	 && 	 232 	&   212  &  266	&  214	&  219  	&  1143	&  8.21 &&	 104 & 7.07\\
2 	 &&   17  	&    8   &   20   &   24  	&   18  	&   87	&  0.62 &&	  8  & 0.54\\
3 	 &&    2  	&    1   &   2 	&   2 	& 	4	& 	11	&  0.08 && 	   0 & 0\\
\hline
Count && 3354	& 3512 	& 2789 	& 2276 	& 1992	& 13923 	& 100 && 	4659 & 100 \\
\hline \hline
\end{tabular}
\end{table}

\begin{table}[h!t!]
\centering
\caption{Average severity by frequency and year} \label{tab.y}
\begin{tabular}{l r r r r r r r r r r }
\hline \hline
&& \multicolumn{6}{l}{Train} && \multicolumn{2}{l}{Test}\\ \cline{3-8} \cline{10-10}
Frequency && 1995 	& 1996 	& 1997 	& 1998 	& 1999	& Avg. severity && 2000\\
\hline
1	 && 	 4742 	&   4530  &  4567	&  5440	&  3895  	&  4630 && 4557\\
2 	 &&   6319  	&   3633  &  3629 &  3781  	&  3644  	&  4200 && 2950\\
3 	 &&   2630 	&   1687  &  4991 	&  6015 	&  3065	&  3747 && -\\
4   	 && 		- 	& 	- 	 &	 - 	&  	- 	&   - 	&  -    && - \\
\hline
Avg. severity && 4892	&  4431 	& 4455 	& 5156 	& 3824	& 4553 && 4046\\
\hline \hline
\end{tabular}
\end{table}

\subsection{Estimation result}

For the data analysis, we consider the model with regression setting described in Corollary \ref{cor.1}. We assume the distribution function, $F$, follows a Poisson distribution with mean parameter, $\lambda$, for the frequency, and the distribution function, $G$, follows a Weibull distribution with mean parameter, $\xi$, and shape parameter, $\nu$, for the severity component.  With a log link function, we therefore have
\[
\lambda=\exp(\boldsymbol{x}\boldsymbol{\beta}), \quad\hbox{and}\quad
\xi=\exp(\boldsymbol{w}\boldsymbol{\gamma}),
\]
where $\boldsymbol{x}$ and $\boldsymbol{w}$ are the vectors of model matrices for each policyholder
\footnote{In this example, $\boldsymbol{x}$ includes Gender, Age, and VehAge, and
 $\boldsymbol{w}$ includes VehCapa, Age, and VehAge.}, and $\boldsymbol{\beta}$ and $\boldsymbol{\gamma}$ are the corresponding parameters for the frequency and severity, respectively. Hence, in this data analysis, we consider following parameters: $(\boldsymbol{\beta},\boldsymbol{\gamma},\nu, \theta_1, \theta_2,\theta_3,\theta_4)$.

Table \ref{res.est} presents the summary statistics for the model estimation results. This table provides details of the estimated parameters for the frequency part, the severity part, as well as the copula part. There are four measures detailed in this table: estimates (est), standard errors (std.error), t statistics (t), and corresponding p-values. Note that the asterisk sign (*) indicates that the estimate is significant at 0.05 level. From the table, the results are as expected. For example, despite the disproportionate percentage, male drivers are expected to incur more accidents than female drivers. When analyzed by age, broadly speaking, both frequency and severity tend to decrease with age. Elderly drivers, for example, have fewer number of accidents with smaller average costs per accident than drivers less than 25 years old.

\begin{table}[ht!]
\caption{Estimation result} \label{res.est}
\centering
\begin{tabular}{ l r r r r r r r  }
 \hline
parameter& est & std.error & t & p-value & \\ \hline
 \multicolumn{4}{l}{ {\bf Frequency part}} \\
(Intercept)	&	-2.237 	&	0.289 	&	-7.749 	&	$<$.0001	&	*	\\
Gender		&	0.125 	&	0.067 	&	1.865 	&	0.0623 	&		\\
VehAge2		&	0.048 	&	0.101 	&	0.477 	&	0.6336 	&		\\
VehAge3		&	-0.146 	&	0.109 	&	-1.339 	&	0.1806 	&		\\
VehAge4		&	0.835 	&	0.443 	&	1.886 	&	0.0594 	&		\\
Age2			&	0.323 	&	0.270 	&	1.197 	&	0.2313 	&		\\
Age3			&	0.156 	&	0.268 	&	0.583 	&	0.5601 	&		\\
Age4			&	-0.569 	&	0.460 	&	-1.237 	&	0.2161 	&		\\
 \multicolumn{4}{l}{{\bf Severity part} } \\
(Intercept)	&	3.889 	&	0.938 	&	4.148 	&	$<$.0001	&	*	\\
logVehCapa	&	0.700 	&	0.108 	&	6.468 	&	$<$.0001	&	*	\\
VehAge2		&	-0.010 	&	0.097 	&	-0.099 	&	0.9212 	&		\\
VehAge3		&	-0.060 	&	0.110 	&	-0.547 	&	0.5843 	&		\\
VehAge4		&	-0.624 	&	0.565 	&	-1.106 	&	0.2690 	&		\\
Age2			&	-1.092 	&	0.478 	&	-2.284 	&	0.0224 	&	*	\\
Age3			&	-0.976 	&	0.475 	&	-2.055 	&	0.0400 	&	*	\\
Age4			&	-0.969 	&	0.668 	&	-1.451 	&	0.1468 	&		\\
$\nu$		&	0.802 	&	0.045 	&	17.910 	&	$<$.0001	&	*	\\
\multicolumn{4}{l}{{\bf Copula part}} \\
$\theta_1$	&	0.263 	&	0.048 	&	5.509 	&	$<$.0001	&	*	\\
$\theta_2$	&	0.057 	&	0.071 	&	0.795 	&	0.4266 	&		\\
$\theta_3$	&	0.409 	&	0.138 	&	2.967 	&	0.0030 	&	*	\\
$\theta_4$	&	0.445 	&	0.133 	&	3.341 	&	0.0008 	&	*	\\
\hline
\end{tabular}
\end{table}

In terms of understanding the presence of dependence, Table \ref{res.est} also summarizes estimates of the four copula parameters of dependence as described by $\theta_i$, for $i=1,2,3,4$, and the estimates are not all significantly nonzero at the 5$\%$ level. For the interpretation of copula parameters in Table \ref{res.est}, one can recall the following meaning of $\theta_1, \theta_2, \theta_3$, and $\theta_4$:
	\begin{itemize}
		\item $\theta_1$: dependence parameter between the common random effect ${R}$ and the frequency for every year,
		\item $\theta_2$: dependence parameter between the common random effect ${R}$ and each severity for every year,
		\item $\theta_3$, $\theta_4$: dependence parameters between a frequency and each severity within a year not explained by ${R}$.
	\end{itemize}
Thus, according to the estimation results which shows that only $\theta_1$ and $\theta_2$ are significantly different from zero, we can claim presence of both types of dependence; temporal dependence of claim frequencies and severities as well as dependence between the frequency and severity can be explained by common random effect ${R}$. On the other hand, there is weak evidence of dependence between a frequency and each frequency within a year not explained by ${R}$.

While the values of $\theta$ tells us the relationship between the common random effects and claims, one can directly quantify the magnitude of dependence among the claims by observing the estimated values of $\rho$'s. According to the model specification in \eqref{eq.6}, the estimates of dependence parameters, $\rho_1, \, \rho_2, \, \rho_3, \, \rho_4,$ and $\rho_5$ are calculated from the estimates of $\theta_1, \, \theta_2, \, \theta_3$, and $\theta_4$ as in \eqref{eq.6}, and their standard errors are obtained using delta method.
Table \ref{rho:est} summarizes the derived estimates together with the resulting standard errors of $\rho_1, \, \rho_2, \, \rho_3, \, \rho_4,$ and $\rho_5$.

\begin{table}[ht!]
\caption{Derived estimates and standard errors of $\rho$'s} \label{rho:est}
\centering
\begin{tabular}{ l r r r r r r r  }
\hline parameter& est & std.error & t & p-value & \\ \hline
$\rho_1$	&	0.1968	&	0.074	&	2.655	&	0.0079	&	*	\\
$\rho_2$	&	0.2015	&	0.119	&	1.688	&	0.0915	&		\\
$\rho_3$	&	0.0690	&	0.025	&	2.754	&	0.0059	&	*	\\
$\rho_4$	&	0.0149	&	0.019	&	0.780	&	0.4355	&		\\
$\rho_5$	&	0.0032	&	0.008	&	0.398	&	0.6909	&		\\
\hline
\end{tabular}
\end{table}

It is interesting to observe that there is now a clearer evidence of all types of dependencies in our multi-year microlevel model. For example, $\rho_1$ describes correlation between a frequency and a severity within a single year, and results provide strong evidence of a positive dependence. The estimate for $\rho_1$ is $0.1968$ with a standard error of $0.074$, which leads to a very small p-value indicating significantly different from zero. Using the results from (\ref{eq.6}), despite the non-significance of $\theta_3$ and $\theta_4$ directly drawn from the estimated model, there is a clear inherent dependence driven by the shared random effect through the interplay with $\theta_1$ and $\theta_2$. A similar argument can be said of the other $\rho$'s.

\subsection{ Validation }

For validation of the proposed model in terms of the individual loss prediction for the 1,472 policyholders in the hold-out sample, we compare the following four models: full model, nested model 1 with $\theta_3=\theta_4=0$, the nested model 2 with $\theta_1=\theta_2=0$, and the nested model 3 with $\theta_1=\theta_2=\theta_3=\theta_4=0$. We measure the quality of prediction as mean squared error (MSE) of average of individual loss prediction over 5,000 Monte Carlo simulations under the estimation result from each model. We also use other measures such as  root-mean-square deviation (RMSE), mean absolute error (MAE), and the Gini index in \citet{frees2011summarizing, frees2014insurance}.
 For example, MSE for full model is given as
\[
\widehat{{\rm MSE}}^{\rm [full]} = \frac{1}{1472} \sum_{i=1}^{1472} {\left( S_i - \hat{S}_i^{\rm [full]} \right)^2},
\]
where $S_i$ is the observed aggregate loss for the $i$-th policyholder in 2000, and
$\hat{S}_i^{\rm[full]}$ is the average of predicted aggregate loss for the $i$-th policyholder over 5,000 MC samples from based on the full model.
The results are shown in Table \ref{mse}.
In the table,  full model shows the best performance in terms of MSE and RMSE, and nested model 1 shows best performance in terms of MAE. On the other hand, nested model 2 shows the best performance in terms of the Gini index.

\begin{table}[ht!]
\caption{Means squared error } \label{mse}
\centering
\begin{tabular}{p{2cm} p{2cm} p{2cm} p{2cm} p{2cm}  }
\hline
& Full & Nested 1  & Nested 2  & Nested 3  \\ \hline
RMSE & 2445.409 &  2448.719 & 2445.519  & 2448.276 \\
MSE &  5980026 &  5996227 & 5980564  & 5994053 \\
MAE & 596.87 & 524.2761 & 605.0203 & 530.9766 \\
Gini & 28.535 & 30.478& 27.560& 29.547\\
\hline
\end{tabular}
\end{table}

\section{Final remarks} \label{sec.6}

This article focuses on the development of a multi-year microlevel collective risk model which accounts for a flexible dependence structure for claim frequencies and claim severities. The common theme in the literature is a framework that regards dependence between claim frequency and the average severity. Our motivating example demonstrates that for these types of dependence models, the copula structure can be constrained. Here, we also show that it is even difficult to arrive at the naive assumption of independence among severities.

In our multi-year microlevel collective risk model, we develop a shared random effects framework that captures various relevant types of dependence between claim frequencies and claim severities over multiple years. The shared random effect parameter induces several forms of dependence; it has similar structure to a one-factor copula model previously studied. Our proposed scheme has the advantages of not only ease of computation but also the capacity to draw intuitive interpretation to the results. Furthermore, it covers other types of dependent frequency and severity models that have previously been studied. One can see that both one-year dependent compound risk model and traditional independent compound risk model are special cases of our proposed model, where $\theta_1=\theta_2=0$ and $\theta_1=\theta_2=\theta_3=\theta_4=0$, respectively. In the paper, we additionally provide an efficient way to obtain the joint density of multi-year claim required without heavy numerical integration.

We calibrated our proposed with a dataset from a Singapore automobile insurance company, which contains policy characteristics and microlevel claims information for multiple years. The estimation results show us that all five types of correlations considered in a multi-year microlevel collective risk model are statistically significant. We note that the driving force for the dependencies originates from the shared random effect parameter. On top of that, out-of-sample validation results with the proposed model show us that it can be helpful to consider various types of dependence to increase the prediction performances.

\section*{Acknowledgements}
 Rosy Oh was supported by Basic Science Research Program through the National Research Foundation of Korea (NRF) funded by the Ministry of Education (2019R1A6A1A11051177 and 2020R1I1A1A01067376).
Himchan Jeong was supported by James C. Hickman Doctoral Stipend funded by the Society of Actuaries (SOA).
Emiliano A. Valdez was supported by CAE Research Grant on Applying Data Mining Techniques in Actuarial Science funded by the Society of Actuaries (SOA).
Jae Youn Ahn was supported by a National Research Foundation of Korea (NRF) grant funded by the Korean Government (2020R1F1A1A01061202).

\newpage

\appendix
\numberwithin{equation}{section}
\renewcommand{\theequation}{A.\arabic{equation}}
\section{Proof of Lemmas 1 and 2}\label{appendix.a}

\begin{lemma}\label{lem.1}
If $\left( R, \boldsymbol{Z}_{(\boldsymbol{\tau})} \right)$ follows \eqref{eq.13}, then the probability of $\Phi^{-1}(F_t(N_t))$ conditional on
\[
\left( R, Y_{t,1}, \ldots, Y_{t,n_t}\right)=\left( r, y_{t,1}, \ldots, y_{t,n_t}\right)
\]
 is given by
\begin{equation*} 
\mathbb{P}(N_t\leq n|r,y_{t,1}, \cdots, y_{t, n_t}) = \Phi \left( \frac{\Phi^{-1}(F_t(n))-\mu_t}{\sigma_t} \right)
\end{equation*}
where
\[
\mu_t= \left(\theta_1, \rho_1, \cdots, \rho_1 \right) \left( \boldsymbol{\Sigma}_{(tt)}^{(\boldsymbol{\rho}^*)}\right)^{-1} \left( r, \Phi^{-1}\left(G(y_{t,1})\right), \cdots, \Phi^{-1}\left(G(y_{t,n_t})\right) \right)^{\mathrm T}
\]
and
\[
\left(\sigma_t\right)^2=1-\left(\theta_1, \rho_1, \cdots, \rho_1 \right) \left( \boldsymbol{\Sigma}_{(tt)}^{(\boldsymbol{\rho}^*)}\right)^{-1}\left(\theta_1, \rho_1, \cdots, \rho_1 \right)^{\mathrm T}
\]
with $\rho_1^*=\theta_2, \, \rho_2^*=\rho_2$.
\end{lemma}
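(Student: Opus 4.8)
The plan is to exploit the latent multivariate Gaussian representation underlying the Gaussian copula in \eqref{eq.13}. Since $C_{(\boldsymbol{n})}^{(\boldsymbol{\rho}, \boldsymbol{\theta})}$ is a Gaussian copula with correlation matrix $\boldsymbol{\Sigma}_{(\boldsymbol{n})}^{(\boldsymbol{\rho}, \boldsymbol{\theta})}$, I would first introduce the latent standard normal scores: $R$ itself (identified through $\Phi(R)$), the frequency score $W_t := \Phi^{-1}(F_t(N_t))$, and the severity scores $\Phi^{-1}(G(Y_{t,j}))$. By the probability integral transform and the generalized-inverse construction of a discrete variable from its latent score, the event $\{N_t \le n\}$ is equivalent to $\{W_t \le \Phi^{-1}(F_t(n))\}$. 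Hence the claimed conditional distribution function reduces to computing the conditional law of the single Gaussian score $W_t$ given $\bigl(R, \Phi^{-1}(G(Y_{t,1})), \ldots, \Phi^{-1}(G(Y_{t,n_t}))\bigr) = \bigl(r, \Phi^{-1}(G(y_{t,1})), \ldots, \Phi^{-1}(G(y_{t,n_t}))\bigr)$.

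Next I would reduce to year-$t$ quantities. Because the conditioning set references only $R$ and the year-$t$ severities, and because every marginal of a multivariate normal is again normal, the subvector $\bigl(W_t, R, \Phi^{-1}(G(Y_{t,1})), \ldots, \Phi^{-1}(G(Y_{t,n_t}))\bigr)$ is jointly Gaussian with correlation matrix obtained by reading off the relevant entries of $\boldsymbol{\Sigma}_{(\boldsymbol{n})}^{(\boldsymbol{\rho}, \boldsymbol{\theta})}$ from Definitions \ref{def.1} and \ref{def.2} (consistently with the block structure established in Theorem \ref{thm.1}). Identifying the blocks, the cross-covariance between $W_t$ and the conditioning vector is exactly $\left(\theta_1, \rho_1, \cdots, \rho_1\right)$ --- the correlation $\theta_1$ with $R$ and the common within-year correlation $\rho_1$ with each severity --- while the covariance matrix of the conditioning vector is $\boldsymbol{\Sigma}_{(tt)}^{(\boldsymbol{\rho}^*)}$ with $\rho_1^* = \theta_2$ (the $R$-to-severity correlation playing the role of the first-coordinate correlation) and $\rho_2^* = \rho_2$.

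With these identifications, I would apply the classical conditional normal formula: for jointly normal $(X, \boldsymbol{V})$, the conditional law of $X$ given $\boldsymbol{V} = \boldsymbol{v}$ is normal with mean $\operatorname{cov}(X, \boldsymbol{V})\,\operatorname{cov}(\boldsymbol{V})^{-1}\boldsymbol{v}$ and variance $\operatorname{var}(X) - \operatorname{cov}(X, \boldsymbol{V})\,\operatorname{cov}(\boldsymbol{V})^{-1}\operatorname{cov}(\boldsymbol{V}, X)$. Substituting $X = W_t$ and $\boldsymbol{V}$ equal to the conditioning vector yields precisely the stated $\mu_t$ in \eqref{eq.14} and $\sigma_t^2$ in \eqref{eq.15}. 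Finally, using the reduction $\{N_t \le n\} = \{W_t \le \Phi^{-1}(F_t(n))\}$ together with $W_t \mid \boldsymbol{V} = \boldsymbol{v} \sim N(\mu_t, \sigma_t^2)$ and standardizing gives
\[
\mathbb{P}(N_t\leq n \mid r, y_{t,1}, \cdots, y_{t, n_t}) = \Phi\!\left( \frac{\Phi^{-1}(F_t(n)) - \mu_t}{\sigma_t} \right),
\]
as required.

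I expect the main obstacle to be the careful bookkeeping at the discrete--continuous interface rather than any single hard computation. Because $N_t$ is discrete, I must justify the equivalence $\{N_t \le n\} = \{W_t \le \Phi^{-1}(F_t(n))\}$ rigorously through the latent-score construction, and verify that conditioning on the continuous severities --- which pins the severity scores to deterministic values and therefore is compatible with the copula density interpretation --- produces a genuine conditional law of the surviving Gaussian coordinate. Once this reduction is in place, extracting the correct sub-blocks of $\boldsymbol{\Sigma}_{(\boldsymbol{n})}^{(\boldsymbol{\rho}, \boldsymbol{\theta})}$ and invoking the conditional normal formula is routine.
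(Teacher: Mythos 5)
Your proposal is correct and follows essentially the same route as the paper's own proof: both invoke the latent Gaussian (inherited/copula) representation to identify the joint law of $\bigl(\Phi^{-1}(F_t(N_t)), R, \Phi^{-1}(G(Y_{t,1})), \ldots, \Phi^{-1}(G(Y_{t,n_t}))\bigr)$ as multivariate normal with cross-covariance $(\theta_1, \rho_1, \cdots, \rho_1)$ and conditioning block $\boldsymbol{\Sigma}_{(tt)}^{(\boldsymbol{\rho}^*)}$ (with $\rho_1^*=\theta_2$, $\rho_2^*=\rho_2$), and then apply the standard conditional normal formula. Your explicit attention to the discrete--continuous interface (the equivalence $\{N_t \le n\} = \{W_t \le \Phi^{-1}(F_t(n))\}$ via the latent-score construction) is a point the paper glosses over, but it is the same underlying argument.
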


\begin{proof}
By the inherited property of the Gaussian copula, it is easy to see that
\begin{equation}\label{eq.a1}
\begin{aligned}
\P{ \Phi^{-1}(F_t(N_t))\le x_0, R\le r, \Phi^{-1}(G_{t,1}(Y_{t,1}))\le x_1, \cdots, \Phi^{-1}(G_{t, n_t}(Y_{t,n_t}))\le x_{n_t}  }\\
=\Phi_{{\bf 0}_{n_t+2}, \boldsymbol{\Sigma}_t}\left(x_0, r, x_1, \cdots, x_{n_t} \right)
\end{aligned}
\end{equation}
where
$$
[\Sigma_t]_{\ell m} =\begin{cases}
  1		, & \ell=m;\\
\theta_1  , & \ell + m = 3;\\
\rho_1     , & 1 = \ell, m >2 \text{ or } \ell > 2, m =1;\\
\theta_2  , & 2=\ell < m \text{ or } \ell > m =2;\\
\rho_2	, & \hbox{elsewhere};
\end{cases} =
\left(
  \begin{array}{c:c:cccc}
    1 & \theta_1 & \rho_1 & \cdots & \cdots &\rho_1 \\
             \hdashline
    \theta_1 &  1 &  \theta_2 & \cdots & \cdots &\theta_2 \\
             \hdashline
    \rho_1 & \theta_2 & 1 & \rho_2 & \cdots &\rho_2  \\
    \vdots & \vdots & \rho_2 & \ddots & \ddots &\vdots  \\
    \vdots & \vdots & \vdots & \ddots & 1 &\rho_2  \\
    \rho_1 & \theta_2 & \rho_2 & \cdots & \rho_2 & 1  \\
  \end{array}
\right)
$$
Furthermore, \eqref{eq.a1} implies
 \[
 \P{\Phi^{-1}(F_t(N_t))\le x_0 | R=r, Y_{t,1}=y_{t,1}, \ldots, Y_{t,n_t}=y_{t,n_t}}=\Phi\left( \frac{x_0-\mu_t}{\sigma_t}\right)
 \]
%
where
\[
\mu_t= \left(\theta_1, \rho_1, \cdots, \rho_1 \right) \left( \boldsymbol{\Sigma}_{(tt)}^{(\boldsymbol{\rho}^*)}
\right)^{-1} \left( r, \Phi^{-1}\left(G(y_{t,1})\right), \cdots, \Phi^{-1}\left(G(y_{t,n_t})\right) \right)^{\mathrm T}
\]
and
\[
\left(\sigma_t\right)^2=1-\left(\theta_1, \rho_1, \cdots, \rho_1 \right) \left( \boldsymbol{\Sigma}_{(tt)}^{(\boldsymbol{\rho}^*)}
\right)^{-1}\left(\theta_1, \rho_1, \cdots, \rho_1 \right)^{\mathrm T}
\]
with $\rho_1^*=\theta_2, \, \rho_2^*=\rho_2$.

\end{proof}

\begin{lemma}\label{lem.2}
Consider the settings in \eqref{eq.13}. Then, the density function of $\left( R, \boldsymbol{Z}_{t} \right)$ conditional on $R=r$ is given by
  \begin{equation*}
g^{\rm [joint]*}_t\left(y_{t,1}, \cdots, y_{t, n_t} |r\right)=
\phi_{\boldsymbol{\mu^*}, \boldsymbol{\Sigma}^*}\left(
 \Phi^{-1}\left( G\left(y_{t,1}\right)\right), \cdots, \Phi^{-1}\left( G\left(y_{t,n_t}\right)\right)
 \right)
\prod\limits_{j=1}^{n_t}\frac{g(y_{t,j})}{\phi\left( \Phi^{-1}\left( G(y_{t,j})\right) \right)},
\end{equation*}
where
\[
\mu^*=r\, \theta_2 \boldsymbol{1}_{n_t}\quad\hbox{and}\quad
\boldsymbol{\Sigma}^*=(1-\rho_2) \boldsymbol{I}_{n_t} +  (\rho_2-\theta_2^2) \boldsymbol{J}_{n_t\times n_t}.
\]
\end{lemma}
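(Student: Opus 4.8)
The plan is to exploit the Gaussian copula structure in \eqref{eq.13} so that, after the probability-integral transforms, the problem reduces to a standard conditioning computation for a multivariate normal vector followed by a change of variables. First I would introduce the latent variables $W_j := \Phi^{-1}\left(G(Y_{t,j})\right)$ for $j=1,\ldots,n_t$. By the definition of the Gaussian copula in \eqref{eq.13} together with the fact that the first argument is $\Phi(r)$ (so that the latent variable attached to $R$ is $R$ itself), the vector $\left(\Phi^{-1}(F_t(N_t)), R, W_1, \ldots, W_{n_t}\right)$ is jointly standard normal with correlation matrix $\boldsymbol{\Sigma}_t$ from Lemma \ref{lem.1}. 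Since we condition on $R=r$ only, the frequency coordinate is irrelevant and can be marginalized out by the inheritance property of the Gaussian copula, leaving $(R, W_1, \ldots, W_{n_t})$ jointly normal with the submatrix of $\boldsymbol{\Sigma}_t$ obtained by deleting the frequency row and column. Reading off the entries of $\boldsymbol{\Sigma}_t$, where $\mathrm{cov}(R, W_j)=\theta_2$, $\mathrm{var}(W_j)=1$, and $\mathrm{cov}(W_i, W_j)=\rho_2$ for $i\neq j$, this submatrix is
\[
\left(
\begin{array}{cc}
1 & \theta_2\, \boldsymbol{1}_{n_t}^{\mathrm T} \\
\theta_2\, \boldsymbol{1}_{n_t} & (1-\rho_2)\boldsymbol{I}_{n_t} + \rho_2\, \boldsymbol{J}_{n_t\times n_t}
\end{array}
\right).
\]

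Next I would apply the standard Gaussian conditioning formula to obtain the law of $(W_1, \ldots, W_{n_t})$ given $R=r$. The conditional mean is $\theta_2\, \boldsymbol{1}_{n_t}\cdot 1^{-1}\cdot r = r\,\theta_2\, \boldsymbol{1}_{n_t} = \boldsymbol{\mu^*}$, and the conditional covariance is the Schur complement of the $R$-block,
\[
\left[(1-\rho_2)\boldsymbol{I}_{n_t} + \rho_2\, \boldsymbol{J}_{n_t\times n_t}\right] - \theta_2\, \boldsymbol{1}_{n_t}\cdot 1^{-1}\cdot \theta_2\, \boldsymbol{1}_{n_t}^{\mathrm T} = (1-\rho_2)\boldsymbol{I}_{n_t} + (\rho_2-\theta_2^2)\, \boldsymbol{J}_{n_t\times n_t} = \boldsymbol{\Sigma}^*,
\]
where I use $\boldsymbol{1}_{n_t}\boldsymbol{1}_{n_t}^{\mathrm T} = \boldsymbol{J}_{n_t\times n_t}$. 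Hence $(W_1, \ldots, W_{n_t}) \mid R=r \sim N(\boldsymbol{\mu^*}, \boldsymbol{\Sigma}^*)$, whose density is exactly $\phi_{\boldsymbol{\mu^*}, \boldsymbol{\Sigma}^*}$.

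Finally I would transform back from the $W_j$ to the $Y_{t,j}$ through the monotone map $W_j = \Phi^{-1}\left(G(Y_{t,j})\right)$. The associated Jacobian contributes the factor $\prod_{j=1}^{n_t} g(y_{t,j})/\phi\left(\Phi^{-1}(G(y_{t,j}))\right)$, and multiplying this by $\phi_{\boldsymbol{\mu^*}, \boldsymbol{\Sigma}^*}$ evaluated at $\left(\Phi^{-1}(G(y_{t,1})), \ldots, \Phi^{-1}(G(y_{t,n_t}))\right)$ yields the claimed density for $g^{\rm [joint]*}_t\left(y_{t,1}, \cdots, y_{t,n_t}\mid r\right)$. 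The only steps requiring genuine care are the bookkeeping in extracting the correct correlation submatrix from $\boldsymbol{\Sigma}_t$ (in particular confirming that conditioning on $R$ alone removes the frequency coordinate) and the Schur-complement simplification to the rank-one-corrected form $\boldsymbol{\Sigma}^*$; once the Gaussian latent structure is isolated, the conditioning and the change of variables are entirely routine.
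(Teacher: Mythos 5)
Your proposal is correct and follows essentially the same route as the paper's proof: both invoke the inheritance (marginalization) property of the Gaussian copula to isolate the joint law of $\left(R, Y_{t,1}, \ldots, Y_{t,n_t}\right)$ with correlation parameters $\rho_1^*=\theta_2$, $\rho_2^*=\rho_2$, then apply the standard multivariate normal conditioning formula (mean $r\,\theta_2\boldsymbol{1}_{n_t}$, Schur-complement covariance $\boldsymbol{\Sigma}^*$), and finish with the change of variables producing the Jacobian factor $\prod_{j} g(y_{t,j})/\phi\left(\Phi^{-1}(G(y_{t,j}))\right)$. The only cosmetic difference is that you start from the full latent vector including the frequency coordinate and delete it, whereas the paper applies inheritance directly to the continuous sub-vector; this sidesteps the minor imprecision of calling $\Phi^{-1}(F_t(N_t))$ jointly normal when $N_t$ is discrete, but your marginalization immediately lands in the same place.
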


\begin{proof}
By the inherited property of the Gaussian copula, it is easy to see that
\[
\left( R, Y_{t,1}, \ldots, Y_{t,n_t} \right) \sim C_{(tt)}^{(\boldsymbol{\rho}^*)}
  \left( \Phi,   G_{t,1}, \cdots, G_{t, n_t} \right)
\]
 with $\rho_1^*=\theta_2, \, \rho_2^*=\rho_2$, where $C_{(tt)}^{(\boldsymbol{\rho}^*)}$ is a Gaussian copula with correlation matrix $\boldsymbol{\Sigma}_{(tt)}^{(\boldsymbol{\rho}^*)}$.
As a result, we have that, conditional on $R=r$,  the random vector
$\Phi^{-1}(G_{t,1}(y_{t,1})), \cdots, \Phi^{-1}(G_{t, n_t}(y_{t,n_t})) $ follows a multivariate normal distribution, with mean vector given as
$$
(\theta_2, \ldots, \theta_2)^{\mathrm{T}} \cdot {I_1}^{-1} \cdot r = r \theta_2 \boldsymbol{1}_{n_t},
$$
and covariance matrix given as
$$
(1-\rho_2) \boldsymbol{I}_{n_t} +  \rho_2 \boldsymbol{J}_{n_t\times n_t}  - (\theta_2, \ldots, \theta_2) \cdot (\theta_2, \ldots, \theta_2)^{\mathrm{T}} = (1-\rho_2) \boldsymbol{I}_{n_t} +  (\rho_2-\theta_2^2) \boldsymbol{J}_{n_t\times n_t}.
$$

Therefore, we have
$$
\P{Y_{t,1}\le y_{t,1}, \cdots, Y_{t, n_t}\le y_{t, n_t} |r}=
\Phi_{\mu^*, \boldsymbol{\Sigma}^*}\left(
 \Phi^{-1}\left( G\left(y_{t,1}\right)\right), \cdots, \Phi^{-1}\left( G\left(y_{t,n_t}\right)\right)
 \right)
$$
with the following corresponding density function
$$
g^*_{t}\left(y_{t,1}, \cdots, y_{t, n_t} |r\right)=
\phi_{\mu^*, \boldsymbol{\Sigma}^*}\left(
 \Phi^{-1}\left( G\left(y_{t,1}\right)\right), \cdots, \Phi^{-1}\left( G\left(y_{t,n_t}\right)\right)
 \right)
\prod\limits_{j=1}^{n_t}\frac{g(y_{t,j})}{\phi\left( \Phi^{-1}\left( G(y_{t,j})\right) \right)}.
$$
\end{proof}

\setcounter{equation}{0}
\renewcommand{\theequation}{B.\arabic{equation}}
\section{Multi-year microlevel collective risk model with $t$ copulas}\label{appendix.b}

The microlevel collective risk model with the Gaussian copula in Model \ref{model.2} can be naturally extended to $t$ copula based model as the following model shows. Here, we use $F_t$, $G_t$, $G_{t,j}$ considered in Section \ref{sec.3.2}, and assume $G_{t,j}=G_t$ for any $t,j\in\mathbb{N}$ for simplicity.

\begin{model}[The $t$ copula model for the multi-year microlevel collective risk model]\label{model.3}
Suppose $\boldsymbol{\rho}$ satisfies \eqref{eq.6}.
Then, consider the random vector $\boldsymbol{Z}_t$ whose joint distribution function $H(\boldsymbol{z}_{\tau})$ is given by the following copula model representation
  \begin{equation}\label{eq.b1}
  H(\boldsymbol{z}_{(\tau)})=C_{(\boldsymbol{n})}^{{\nu}, (\boldsymbol{\rho})}\left(
  F_1(n_1), G_{1,1}(y_{1,1}), \cdots, G_{1, n_1}(y_{1, n_1}),\cdots
  F_\tau(n_\tau), G_{\tau, 1}(y_{\tau,1}), \cdots, G_{\tau, n_\tau}(y_{\tau, n_\tau})
\right)
  \end{equation}
  where $C_{(\boldsymbol{n})}^{{\nu}, \boldsymbol{\rho}}$ is a $t$ copula with scale matrix $\boldsymbol{\Sigma}_{(\boldsymbol{n})}^{(\boldsymbol{\rho})}$ and degree of freedom $\nu$.

\end{model}

Following the similar idea in Section \ref{sec.3.2},  we have the following result.


\begin{corollary}\label{cor.2}
Consider the random vector $  \boldsymbol{Z}_{(\boldsymbol{\tau})}$ whose distribution function is defined in \eqref{eq.b1}. Then, we have the following density function of $  \boldsymbol{Z}_{(\boldsymbol{\tau})}$ at $  \boldsymbol{Z}_{(\boldsymbol{\tau})}=\boldsymbol{z}_{(\boldsymbol{\tau})}$
  \begin{equation}\label{eq.b2}
\begin{aligned}
  &h(\boldsymbol{z}_{(\boldsymbol{\tau})})\\
  &=\int\int  \prod\limits_{t=1}^{\tau}\Bigg[g_{t|R, W}^*\left(y_{t,1}, \cdots, y_{t, n_t} |r, w\right)\\
  &\quad\quad
      \left(
      \Phi\left( \frac{\Phi^{-1}(F(n_t)) -\mu_t}{\sigma_t} \right) - \Phi\left( \frac{\Phi^{-1}(F(n_t-1)) -\mu_t}{\sigma_t} \right)
      \right)\Bigg] \phi(r){\nu} f_{\nu}^{\rm [Chi]}(w \, {\nu}) \,{\rm d}{r}{\rm d}{w}\\
\end{aligned}
\end{equation}
where $f_{\nu}^{\rm [Chi]}$ is a density function of chi-squared distribution with ${\nu}$ degrees of freedom, and
\[
\mu_t= \left(\theta_1, \rho_1, \cdots, \rho_1 \right) \left( \boldsymbol{\Sigma}_{(tt)}^{(\boldsymbol{\rho}^*)}
\right)^{-1} \left( r, \Phi^{-1}\left(G(y_{t,1})\right), \cdots, \Phi^{-1}\left(G(y_{t,n_t})\right) \right)^{\mathrm T}
\]
and
\[
\left(\sigma_t\right)^2=\frac{1-\left(\theta_1, \rho_1, \cdots, \rho_1 \right) \left( \boldsymbol{\Sigma}_{(tt)}^{(\boldsymbol{\rho}^*)}
\right)^{-1}\left(\theta_1, \rho_1, \cdots, \rho_1 \right)^{\mathrm T}}{w}
\]
with $\rho_1^*= \theta_2$ and $\rho_2^*=\rho_2$.
Here, $g_{t|R, W}^*\left(\cdot |r\right)$ is the density function of $\boldsymbol{Y}_t$ conditional on $R=r$ and $W=w$, and given by
\[
\begin{aligned}
&g_{t|R, W}^*\left(y_{t,1}, \cdots, y_{t, n_t} |r, w\right)\\
&=\phi_{\boldsymbol{\mu^*}, \boldsymbol{\Sigma}^*}\left(
 \Phi^{-1}\left( G\left(y_{t,1}\right)\right), \cdots, \Phi^{-1}\left( G\left(y_{t,n_t}\right)\right)
 \right)
\prod\limits_{j=1}^{n_t}\frac{g(y_{t,j})}{\phi\left( \Phi^{-1}\left( G(y_{t,j})\right) \right)}
\end{aligned}
\]
where
\[
\boldsymbol{\mu^*}=r\, \theta_2 \boldsymbol{1}_{n_t}\quad\hbox{and}\quad
\boldsymbol{\Sigma}^*=\frac{(1-\rho_2) \boldsymbol{I}_{n_t} +  (\rho_2-\theta_2^2) \boldsymbol{J}_{n_t\times n_t}}{w}.
\]
\end{corollary}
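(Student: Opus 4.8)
The plan is to reduce the $t$-copula computation to the Gaussian computation already established in Corollary \ref{cor.1} by exploiting the representation of the multivariate $t$ distribution as a normal variance mixture. If $\boldsymbol{A}$ denotes the latent multivariate $t$ vector with scale matrix $\boldsymbol{\Sigma}_{(\boldsymbol{n})}^{(\boldsymbol{\rho})}$ and $\nu$ degrees of freedom that generates $C_{(\boldsymbol{n})}^{\nu,(\boldsymbol{\rho})}$, then one may write $\boldsymbol{A}=\boldsymbol{Z}/\sqrt{W}$, where $\boldsymbol{Z}$ is a centered Gaussian vector with covariance $\boldsymbol{\Sigma}_{(\boldsymbol{n})}^{(\boldsymbol{\rho})}$ and $W$ is an independent mixing variable whose density is $\nu f_{\nu}^{\rm [Chi]}(w\nu)$. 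The key consequence is that, conditional on $W=w$, the latent vector is Gaussian with covariance $\boldsymbol{\Sigma}_{(\boldsymbol{n})}^{(\boldsymbol{\rho})}/w$, so that the conditional-on-$W$ problem is structurally a rescaled instance of Model \ref{model.2}.

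First I would augment the latent vector with the shared random effect through the matrix $\boldsymbol{\Sigma}_{(\boldsymbol{n})}^{(\boldsymbol{\rho},\boldsymbol{\theta})}$ of Definition \ref{def.2}, exactly as in the Gaussian construction, and then condition jointly on $R=r$ and $W=w$. Writing
\[
h(\boldsymbol{z}_{(\boldsymbol{\tau})})=\int\int h^*\!\left(\boldsymbol{z}_{(\boldsymbol{\tau})}\,\middle|\,r,w\right)\phi(r)\,\nu f_{\nu}^{\rm [Chi]}(w\nu)\,{\rm d}r\,{\rm d}w,
\]
the heart of the argument is the evaluation of $h^*(\cdot\,|\,r,w)$. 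Because scaling the augmented matrix by $1/w$ preserves its zero and block pattern, the Schur complement of $\boldsymbol{I}_1$ remains block diagonal under the reparameterization \eqref{eq.6}, so Theorem \ref{thm.1}, and hence the conditional-independence statement of Theorem \ref{thm.2}, applies for each fixed $w$. This is the point where the single mixing variable matters: the years $\boldsymbol{Z}_1,\ldots,\boldsymbol{Z}_\tau$ are independent only once we condition on both $R$ and $W$, and this yields the factorization $h^*(\boldsymbol{z}_{(\boldsymbol{\tau})}\,|\,r,w)=\prod_{t=1}^{\tau}h_t^*(\boldsymbol{z}_t\,|\,r,w)$.

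For each year-$t$ factor I would then repeat the conditional arguments of Lemmas \ref{lem.1} and \ref{lem.2} inside the mixture, splitting $h_t^*$ into a severity density and a frequency probability. The severity block produces a multivariate normal density $\phi_{\boldsymbol{\mu}^*,\boldsymbol{\Sigma}^*}$ as in Lemma \ref{lem.2}, and the frequency block produces an increment of a univariate normal distribution function as in Lemma \ref{lem.1}. The only modification relative to the Gaussian case is that the extra conditioning on $W=w$ rescales every conditional covariance by the factor $1/w$, which propagates to $\boldsymbol{\Sigma}^*=\big[(1-\rho_2)\boldsymbol{I}_{n_t}+(\rho_2-\theta_2^2)\boldsymbol{J}_{n_t\times n_t}\big]/w$ and to $(\sigma_t)^2$, whereas the regression coefficients defining $\mu_t$ and $\boldsymbol{\mu}^*$ are unchanged because the common factor $1/w$ cancels in the Schur-complement regression. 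Substituting these pieces back into the double integral reproduces \eqref{eq.b2}.

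The main obstacle I anticipate is the careful bookkeeping of how the mixing variable $W$ threads through the conditional moments and the marginal transforms: one must verify that $W$ enters the conditional variances multiplicatively as $1/w$ while cancelling out of the conditional means, and that the uniform rescaling of the scale matrix does not disturb the block-diagonality that underpins the conditional independence of $\boldsymbol{Z}_1,\ldots,\boldsymbol{Z}_\tau$. Once these scale-invariance properties are established, the remainder is a direct transcription of the Gaussian derivation of Corollary \ref{cor.1}, supplemented only by the additional outer integration against $\nu f_{\nu}^{\rm [Chi]}(w\nu)$.
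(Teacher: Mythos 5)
Your proposal is correct and takes essentially the same approach as the paper: the paper's entire proof consists of noting that the multivariate $t$ distribution with $\nu$ degrees of freedom is a normal scale mixture---conditional on the latent variable $W=w$, whose density is $\nu f_{\nu}^{\rm [Chi]}(w\,\nu)$, the latent vector is Gaussian with scale matrix divided by $w$---so that the result follows immediately from Corollary \ref{cor.1}. Your write-up merely makes explicit the bookkeeping the paper leaves implicit (conditional independence of the years given both $R$ and $W$, the $1/w$ rescaling of the conditional covariances, and the cancellation of $w$ in the regression coefficients defining $\mu_t$ and $\boldsymbol{\mu}^*$), all of which is sound.
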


\begin{proof}[Proof of Corollary \ref{cor.2}]
Knowing that multivariate $t$-distribution with the degree of freedom $\nu$ can be represented as a multivariate normal distribution conditional on the latent variable $W=w$ whose density function at $W=w$ is given by ${\nu} f_{\nu}^{\rm [Chi]}(w \, {\nu})$, the proof follows immediately from Corollary \ref{cor.1}.
\end{proof}
Note that Corollary \ref{cor.1} is a special case of Corollary \ref{cor.2} when $\nu = \infty$.

\setcounter{equation}{0}
\renewcommand{\theequation}{C.\arabic{equation}}
\section{Mathematical Justification of Model \ref{model.1}}\label{appendix.c}
   As briefly discussed in Remark \ref{rem.1}, Model \ref{model.1} is not casual in the sense that the length or dimension of the observation varies depending on the value of the observation.
   One solution to detour the difficulty from the varying dimension of the observation is that we may assume the infinite number of severities regardless of the value of the frequency $n_t$. Specifically, we define
   \[
   \boldsymbol{Z}_t(\boldsymbol{k}_t):=\left(N_1, \boldsymbol{Y}_{1}(k_1), \cdots, N_t, \boldsymbol{Y}_t(k_t) \right)
   \]
   for any $\boldsymbol{k}_t:=(k_1, \cdots, k_t)\in \mathbb{N}_0^{t}$. Then, $\boldsymbol{y}_{t}=\boldsymbol{y}_{t}(n_t)$ can be understood as the observation where we only observe
   first $n_t$ severities among the infinite number of severities.
   Then, Model \ref{model.1} can be reformulated as follows.

   \begin{model}[Revision of Model \ref{model.1}]\label{model.4}
We repeatedly define the following random effect model for all possible values of
\begin{equation}\label{eq.c1}
k_t\in \mathbb{N}_0, \quad t=1, \cdots
\end{equation}
 where the joint distribution between observations and the random effect model is presented with the copula model with parts i and iv are the same as in Model \ref{model.1}.
\begin{enumerate}
  \item[ii.] Conditional on $R=r$, we have that
  $\boldsymbol{Z}_t(k_t)$ for $t=1, \cdots$ are independent observations whose distribution function is given by
      \[
      H_t^\#\left( \boldsymbol{z}_t(k_t)|r\right):=C_{(\theta_3, \theta_4)}\left( F_t(n_t|r),  G_t(y_{t,1}|r), \ldots, G_t(y_{t,k_t}|r) \right).
      \]
      As a result, we have the following distribution function of $\boldsymbol{Z}_{(\tau)}(\boldsymbol{k}_\tau)$
      \[
      H^\#(\boldsymbol{z}_{(\tau)}(\boldsymbol{k}_\tau)):=\int \prod\limits_{t=1}^{\tau} H_t^\#\left( \boldsymbol{z}_t(k_t)|r\right) \pi(r) {\rm d}r.
      \]

      \item[iii.] The parameters $\theta_3$ and $\theta_4$ of the copula $C_{(\theta_3, \theta_4)}$ controls the independence between the frequency and severities and independence among individual severities, respectively, within a year so that we have 
$$
{h_t^\#{(\boldsymbol{z}_t(k_t)|r)} } = f_t(n_t|r) g_t^{\rm [joint]\#}(\boldsymbol{y}_t(k_t)|r) \ \text{ if and only if } \ \theta_3=0,
$$
where
$$
g_t^{\rm [joint]\#}{( \boldsymbol{y}_t(k_t)|r)} =\prod\limits_{j=1}^{k_t} g_t{(y_{t,j}|r)}  \ \text{ if and only if } \ \theta_4=0,
$$
where $g_t^{\rm [joint]\#}$ means joint density function of $\boldsymbol{Y}_t(k_t)$.

  \item[v.] 
      $\boldsymbol{y}_t(k_t) \perp R\,$ for all $t=1, \ldots, \tau$ if and only if $\theta_2=0$.
\item[vi.] (Inheritance Property) Consider two distribution functions
\[
H^{[1]}\left( \boldsymbol{z}_t(k_t)|r\right):=H_t^\#\left( \boldsymbol{z}_t(k_t)|r\right)
\quad\hbox{and}\quad
H^{[2]}\left( \boldsymbol{z}_t(k_t^*)|r\right):=H_t^\#\left( \boldsymbol{z}_t(k_t^*)|r\right)
\]
for $k_t\le k_t^*$. Then, we have the following inheritance property
\begin{equation}\label{eq.c2}
H^{[1]}\left( n_t, y_{t,1}, \cdots, y_{t, k_t}|r\right)=
\lim\limits_{y_{t, k_t+1}\rightarrow \infty, \cdots, y_{t, k_t^*}\rightarrow \infty}
H^{[2]}\left( n_t, y_{t,1}, \cdots, y_{t, k_t}, y_{t, k_t+1}, \cdots, y_{t, k_t^*}|r\right)
\end{equation}
for any $\boldsymbol{z}_t(k_t)=\left(n_t, y_{t,1}, \cdots, y_{t, k_t}\right)$ and $r$.

\end{enumerate}
\end{model}

Note that part v in Model \ref{model.4} is necessary for the well-definedness of the model since the model is repeatedly defined for multiple times for \eqref{eq.c1}.
One immediate result from Model \ref{model.4} is that its density function  at $\boldsymbol{Z}_t(k_t)=\boldsymbol{z}_t(k_t)$
\[
h_t^\#(\boldsymbol{z}_{t}(k_t))=\int  h_t^\#{(n_t, \boldsymbol{y}_{t}(k_t) |r)} \pi(r){\rm d}{r}
\]
is well-defined under the classical multivariate analysis with the following relation with
the corresponding joint distribution function
\[
H_t^\#(\boldsymbol{z}_{t}(k_t))= \sum\limits_{x_0=0}^{n_t} \int_{-\infty}^{y_{t, 1}}\cdots\int_{-\infty}^{y_{t, k_t}} h_t^\#(x_0, x_1, \cdots, x_{k_t}) {\rm d}x_1\cdots {\rm d}x_{k_t}.
\]
Furthermore, importantly, we observe that the density function $h_t^\#$   at $\boldsymbol{Z}_t(k_t)=\left(n_t, y_{t,1}, \cdots, y_{t, k_t} \right)$ in Model \ref{model.4}
 coincides with the density function $h_t$ in Model \ref{model.1} at $\boldsymbol{Z}_t=\left(n_t, y_{t,1}, \cdots, y_{t, n_t} \right)$ if $k_t=n_t$. Hence, as long as inheritance property in part vi of Model \ref{model.4} holds, we can see that the density function and the corresponding distribution function in Model \ref{model.1} is well-defined having Model \ref{model.4} as background model. For the simplicity of the presentation, this paper only present Model \ref{model.1} without specifying the background model in Model \ref{model.4}.

\bigskip
\bibliographystyle{apalike}
\bibliography{My_Bib}

\end{document}